\newcommand{\longversion}[1]{#1}
\newcommand{\shortversion}[1]{}
\tikzstyle{literal}=[circle, thin, draw, minimum height=0.3cm]
\tikzstyle{clause}=[rectangle, thin, draw, minimum width=0.3cm, minimum
 \newcommand{\FFF}{\mathcal{F}}
\newcommand{\GGG}{\mathcal{G}} 
 \newcommand{\RRR}{\mathcal{R}}
\newcommand{\Dres}{D^{\text{\normalfont res}}}
\newcommand{\Dmat}{D^{\text{\normalfont mat}}}
 \newcommand{\var}{\mathit{var}}
 \newcommand{\lit}{\mathit{lit}}
\def\hy{\hbox{-}\nobreak\hskip0pt} 
\newcommand{\SB}{\{\,} \newcommand{\SM}{\;{:}\;} \newcommand{\SE}{\,\}}
\newcommand{\Card}[1]{|#1|}
\let\doendproof\endproof
\renewcommand\endproof{~\hfill\qed\doendproof}
\let\doendexample\endexample
\renewcommand\endexample{~\hfill\qed\doendexample}
\begin{document}

\title{Computing Resolution\hy Path Dependencies\\ in Linear Time%
  \thanks{Research supported by the European Research Council (ERC),
    project COMPLEX REASON 239962, and WWTF grant WWTF016.} %
  \thanks{Dedicated to the memory of Marko Samer.}}

\author{Friedrich Slivovsky \and Stefan Szeider}

\institute{
 Institute of Information Systems,
 Vienna University of Technology,
 A-1040 Vienna, Austria
\email{friedrich.slivovsky@tuwien.ac.at,stefan@szeider.net}
}

\maketitle

\begin{abstract}The alternation of existential and universal quantifiers in
  a quantified boolean formula (QBF) generates dependencies among variables
  that must be respected when evaluating the formula. Dependency schemes
  provide a general framework for representing such dependencies. Since it is
  generally intractable to determine dependencies exactly, a set of potential
  dependencies is computed instead, which may include false positives. Among
  the schemes proposed so far, resolution path dependencies introduce the
  fewest spurious dependencies. In this work, we describe an algorithm that
  detects resolution-path dependencies in linear time, resolving a problem
  posed by Van Gelder (CP 2011).
\end{abstract}

\section{Introduction}
Deciding the satisfiability of quantified boolean formulas (QBF) is a
canonical PSPACE-complete problem \cite{StockmeyerMeyer73}. Under standard
complexity theoretic assumptions, that means it is much harder than testing
satisfiability of propositional formulas. The source of this discrepancy can
be found in variable dependencies introduced by the alternation of universal
and existential quantifiers in a QBF. The kind of dependencies we consider can
be illustrated with the following example:
\[\FFF = \forall x \exists y\: (x \vee \neg y) \wedge (\neg x \vee y) \] 
While $\FFF$ is satisfiable, there is no single satisfying assignment to
$y$. Instead, the value of $y$ that satisfies~$\FFF$ \textit{depends} on the
value of $x$.

For formulas in prenex normal form, it is safe to assume that a variable
depends on all variables to its left in the quantifier prefix, but this
assumption may result in a large number of spurious dependencies. More
accurate representations of the dependency structure in a formula can be
exploited for various purposes, and variable dependencies have been studied in
a series of works, including
\cite{AyariBasin02,Biere04,BubeckKleinebuning07,EglyTompitsWoltran02,%
  LonsingBiere2009,LonsingBiere2010,Samer08,SamerSzeider09a,VanGelder11}.

Unfortunately, the problem of computing variable dependencies exactly is
PSPACE-complete \cite{SamerSzeider09a}. In practice one therefore computes an
over\hy approximation of dependencies that may contain false positives. This
leads to a trade-off between tractability and generality.
\begin{sloppypar}
  In a recent paper, Van Gelder \cite{VanGelder11} introduced
  \emph{resolution\hy path dependencies} and argued that they generate fewer
  spurious dependencies than all previously considered notions of variable
  dependency (see Figure~\ref{fig:lattice}).
\end{sloppypar}
\shortversion{\begin{figure}[b]}
\longversion{\begin{figure}}
\begin{center}
\begin{tikzpicture}[>=latex,auto,scale=0.7]
\node[anchor=west] (RP) at (-7,1) {Resolution Path};
\node[anchor=west] (QD) at (-3,1) {Quadrangle};
\node[anchor=west] (TRI) at (0,1) {Triangle};
\node[anchor=west] (STSTD) at (0,0) {Strict Standard};
\node[anchor=west] (STD) at (4,0) {Standard};
\node[anchor=west] (TRIV) at (7,0) {Trivial};
\path[thin,->] (RP) edge (QD.west);
\path[thin,->] (QD) edge (TRI.west);
\path[thin,->] (QD) edge (STSTD.west);
\path[thin,->] (TRI.east) edge (STD);
\path[thin,->] (STSTD) edge (STD.west);
\path[thin,->] (STD) edge (TRIV.west);
\end{tikzpicture}
\end{center}
\caption{Various notions of variable dependency ordered by generality
  \cite{VanGelder11}.  An arrow from A to B should be read as ``A is strictly
  more general than B.''  \emph{Trivial dependencies} include all pairs of
  variables not contained in the same quantifier block as dependent and serve
  as a baseline. \emph{Standard dependencies} \cite{SamerSzeider09a} identify
  dependencies based on a notion of local connectivity of clauses, extending
  ideas introduced in work on universal expansion
  \cite{Biere04,BubeckKleinebuning07}. \emph{Triangle dependencies} generalize
  standard dependencies without increasing the worst-case asymptotic runtime
  \cite{SamerSzeider09a}. \emph{Quadrangle dependencies} in turn refine
  triangle dependencies, and \emph{strict standard dependencies} refine
  standard dependencies \cite{VanGelder11}. \emph{Resolution path
    dependencies} are based on a sophisticated notion of connectivity motivated
  by properties of Q\hy resolution \cite{VanGelder11}.}
\label{fig:lattice}
\end{figure}
Van Gelder stated as an open problem whether resolution\hy path dependencies
can be computed in polynomial time \cite{VanGelder11}. In this work, we solve
this problem by describing a linear\hy time algorithm that identifies
resolution\hy path dependencies.  We obtain this result by a reduction to the
problem of finding properly colored walks in edge\hy colored graphs, which is
in turn solved using a variant of breadth\hy first search.  We thus show that
the most general dependency relation among those considered so far is
tractable. 

\emph{Dependency schemes} are a generic framework for representing variable
dependencies \cite{SamerSzeider09a} that are useful in various settings. In
particular, they have recently been built into state\hy of\hy the\hy art QBF
solvers, with beneficial effects \cite{LonsingBiere2009,LonsingBiere2010}. We
prove that resolution\hy path dependencies give rise to a dependency scheme,
thereby providing a basis for their use across a variety of applications.

\shortversion{The proofs of statements marked with $(\star)$ have been omitted
  due to space constraints. They can be found in the full version of this
  paper, which is available on arXiv:1202.3097.}
\section{Preliminaries}
\subsection{Quantified Boolean Formulas}
In this section, we cover basic definitions and notation used
throughout the paper. For an in\hy depth treatment of theoretical and
practical aspects of QBFs, we refer the reader to
\cite{KleineBuningBubeck09} and \cite{GiunchigliaMarinNarizzano09},
respectively.

We consider quantified boolean formulas in \emph{quantified
  conjunctive normal form} (QCNF). A QCNF formula consists of a
(quantifier) \emph{prefix} and a CNF formula, called the
\emph{matrix}. A CNF formula is a finite conjunction of
\emph{clauses}, where each clause is a finite disjunction of
\emph{literals}. We identify a CNF formula with the set of its
clauses, and a clause with the set of its literals. Literals are
negated or unnegated propositional \emph{variables}. If $x$ is a
variable, we put $\overline{x} = \neg x$ and $\overline{\neg x} = x$,
and let $\var(x) = \var(\neg x) = x$. If $X$ is a set of literals, we
write $\overline{X}$ for the set $\SB \overline{x} \SM x \in X
\SE$. For a clause $C$, we let $\var(C)$ be the set of variables
occuring (negated or unnegated) in $C$.  For a QCNF formula $\FFF$
with matrix $F$, we put $\var(\FFF) = \var(F) = \bigcup_{C \in
  F}\var(C)$, and $\lit(\FFF) = \var(\FFF) \cup
\overline{\var(\FFF)}$. We call a clause \emph{tautological} if it
contains the same variable negated as well as unnegated. Unless
otherwise stated, we assume that the matrix of a formula does not
contain tautological clauses (tautological clauses can be deleted
without changing satisfiability of a formula). The prefix of a QCNF
formula $\FFF$ is a sequence $\mathsf{Q}_1x_1\dots\mathsf{Q}_nx_n$ of
\emph{quantifications} $\mathsf{Q}_ix_i$, where $x_1, \dots, x_n$ are
pairwise distinct variables in $\var(\FFF)$ and $\mathsf{Q}_i \in
\{\forall, \exists\}$ for $1 \leq i \leq n$. We define the
\emph{depth} of variable $x_p$ as $\delta_{\FFF}(x_p) = p$, and let
$q_{\FFF}(x_p) = \mathsf{Q}_p$. A QCNF formula $\FFF'$ is obtained
from $\FFF$ by \textit{quantifier reordering} if there is a
permutation $i_1,\dots,i_n$ of $1,\dots,n$ such that $\FFF' =
\mathsf{Q}_{i_1}x_{i_1},\dots,\mathsf{Q}_{i_n}x_{i_n} F$, where $F$
denotes the matrix of $\FFF$.
\longversion{\begin{sloppypar}}
The sets of \emph{existential} and \emph{universal} variables occurring
in $\FFF$ are given by $\var_{\exists}(\FFF) = \SB x~\in~\var(\FFF)
\SM q_{\FFF}(x) = \exists \SE$ and $\var_{\forall}(\FFF) = \SB x~\in
\var(\FFF) \SM q_{\FFF}(x) = \forall \SE$, respectively. We call a
literal $\ell$ existential (universal) if $\var(\ell)$ is existential
(universal). We assume that every variable in $\var(\FFF)$ appears in
the prefix of $\FFF$, and -- conversely -- that every variable
quantified in the prefix appears in $F$. The \emph{size} of a QCNF
formula $\FFF$ with matrix $F$ is defined as $\Card{\FFF} = \sum_{C
  \in F}\Card{C}$.

For a set $X$ of variables, a \emph{truth assignment} is a mapping $\tau: X
\rightarrow \{0,1\}$. We extend $\tau$ to literals by setting $\tau(\neg x) =
1 - \tau(x)$, for $x \in X$. Let $\tau: X \rightarrow \{0,1\}$ be a truth
assignment and $F$ a CNF formula. By $F[\tau]$ we denote the formula obtained
from $F$ by removing all clauses containing a literal $\ell$ such that
$\tau(\ell) = 1$, and removing from every clause all literals $\ell$ for which
$\tau(\ell) = 0$; moreover, if $\FFF$ is a QCNF formula, we write $\FFF[\tau]$
for the formula obtained from $\FFF$ by replacing its matrix $F$ with
$F[\tau]$ and deleting all superfluous quantifications in its prefix.

The evaluation function $\nu$ on QCNF formulas is recursively defined
by $\nu(\exists x \FFF) = \max(\nu(\FFF[x \mapsto~0]),\nu(\FFF[x
\mapsto 1]))$, $\nu(\forall x\FFF) = \min(\nu(\FFF[x \mapsto
0]),\nu(\FFF[x \mapsto 1]))$, $\nu(\emptyset) = 1$, and
$\nu(\{\emptyset\}) = 0$, where $x \mapsto \varepsilon$ denotes the
assignment $\tau: \{x\} \rightarrow \{0,1\}$ such that $\tau(x) =
\varepsilon$. A QCNF formula $\FFF$ is \emph{satisfiable} if
$\nu(\FFF) = 1$ and \emph{unsatisfiable} if $\nu(\FFF) = 0$. Two
formulas $\FFF$ and $\FFF'$ are \emph{equivalent} if $\nu(\FFF) =
\nu(\FFF')$.
\longversion{\end{sloppypar}}
We call a clause \emph{ternary} if it contains at most three literals. A QCNF
formula is ternary if all of the clauses in its matrix are ternary. We denote
the class of ternary QCNF formulas by Q3CNF.

\subsection{Q\hy Resolution}
Q\hy resolution \cite{KleinebuningKarpinskiFlogel95} is an extension of
propositional resolution. Let $\FFF$ be QCNF formula with matrix~$F$. A
\emph{tree-like Q\hy resolution derivation} of clause $D$ from $\FFF$ is a
pair $\pi = (T,\lambda)$ of a rooted binary tree $T$ and a labeling $\lambda$
satisfying the following properties. The labeling $\lambda$ assigns to each
node a clause, and to each edge a variable. The leaves of $T$ are labeled with
clauses of $F$, and the root of $T$ is labeled with $D$. Whenever a node $t$
has two children $t'$ and $t''$, then there is an existential literal $\ell$
such that $\ell \in \lambda(t')$, $\overline{\ell} \in \lambda(t'')$, and
$\lambda(tt') = \lambda(tt'') = \var(\ell)$. Moreover, $\lambda(t) =
(\lambda(t') \setminus \{\ell\}) \cup (\lambda(t'')
\setminus\{\overline{\ell}\})$ and $\lambda(t)$ is non-tautological. We call
$\lambda(t)$ the \emph{(Q-)resolvent} of $\lambda(t')$ and $\lambda(t'')$, and
say that $\lambda(t)$ is obtained by \emph{resolution} of $\lambda(t')$ and
$\lambda(t'')$ on variable $\var(\ell)$. If a node $t$ has a single child
$t'$, then $\lambda(t) = \lambda(t')\setminus \{\ell\}$ and $\lambda(tt') =
\var(\ell)$ for some \emph{tailing} universal literal $\ell$ in
$\lambda(t')$. A universal literal $\ell$ is tailing in $\lambda(t')$ if for
all existential variables $x \in \var(\lambda(t'))$, we have $\delta_{\FFF}(x)
< \delta_{\FFF}(\var(\ell))$. The clause $\lambda(t)$ is the result of
\emph{universal reduction} of $\lambda(t')$ on variable $\var(\ell)$. We call
an instance of resolution or universal reduction in $\pi$ a \emph{derivation
  step} in $\pi$. We say $\pi$ is \emph{strict} if for every path $t_1,\dots,
t_n$ from the root of $T$ to one of its leaves we have
$\delta_{\FFF}(\lambda(t_it_{i+1})) < \delta_{\FFF}(\lambda(t_{i+1}t_{i+2}))$,
for all $i \in \{1,\dots,n - 2\}$.  We call $\pi$ \emph{regular} if every
existential variable appears at most once as an edge-label on a path from the
root of $T$ to one of its leaves. For a tree\hy like Q\hy resolution
derivation $\pi = (T,\lambda)$, we define the set of \emph{resolved variables
  of} $\pi$ as $\mathit{resvar}(\pi) = \SB y \in \var_{\exists}(\FFF) \SM$
there is an edge $e \in T$ such that $\lambda(e) = y \SE$. We define the
\emph{height} of a tree\hy like Q\hy resolution derivation $\pi = (T,\lambda)$
as the height of $T$. A tree\hy like Q\hy resolution derivation of the empty
clause from $\FFF$ is called a \emph{Q\hy resolution refutation} of $\FFF$.
\begin{theorem}\label{ThmStrictQres}A QCNF formula $\FFF$ is unsatisfiable if and only if it has a
  strict, tree-like Q\hy resolution refutation.
\end{theorem}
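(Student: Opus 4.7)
The plan is to prove both directions, with the forward implication being the nontrivial one. The \emph{soundness} direction---if $\FFF$ has a strict tree-like Q\hy resolution refutation, then $\FFF$ is unsatisfiable---follows immediately from the soundness of ordinary Q\hy resolution~\cite{KleinebuningKarpinskiFlogel95}, since strictness is only an extra constraint on the ordering of derivation steps.

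For the \emph{completeness} direction, I would proceed by induction on the length $n$ of the prefix of $\FFF$. The base case $n = 0$ is trivial: a variable-free unsatisfiable formula must contain the empty clause, which is its own strict refutation. For the inductive step, let $x$ be the outermost variable of $\FFF$. If $x$ is existential, both $\FFF[x \mapsto 0]$ and $\FFF[x \mapsto 1]$ are unsatisfiable, so the induction hypothesis supplies a strict tree-like refutation of each; if $x$ is universal, the hypothesis supplies a strict tree-like refutation of at least one of the two restrictions.

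The key construction is a \emph{lifting} operation that converts a strict refutation $\pi_c$ of $\FFF[x \mapsto c]$ into a Q\hy resolution derivation from $\FFF$. For every leaf of $\pi_c$ labeled with a clause $C$ of the matrix of $\FFF[x \mapsto c]$, replace $C$ by the unique clause of the matrix of $\FFF$ that $C$ descends from---either $C$ itself, or $C \cup \{\ell\}$ with $\ell \in \{x, \neg x\}$---and propagate the potentially introduced $x$-literal upward through the tree, keeping the edge labels unchanged. Because no step of $\pi_c$ resolves on $x$, all resolution steps remain well-defined, and because $\delta_{\FFF}(x) = 1$ is strictly smaller than the depth of every other variable, the tailing condition of every universal reduction in $\pi_c$ survives the addition of an $x$-literal. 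At the root of the lifted tree one obtains $\emptyset$, $\{x\}$, or $\{\neg x\}$. In the existential case, combine the two lifted derivations by a resolution step on $x$ at a new root; in the universal case, append a single universal reduction on $x$ at a new root, whose tailing condition is trivial since $x$ is outermost.

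Strictness of the resulting refutation is then almost automatic: the freshly introduced root-level edges carry label $x$ of depth $1$, while every edge inside each lifted subtree carries a label of depth at least $2$ and, by the induction hypothesis, depths strictly increase along every root-to-leaf path within the subtree. The main obstacle is the careful bookkeeping of the lifting step---verifying that the lifted clauses stay non-tautological (which holds because $x$ does not occur at all in $\pi_c$, so only one polarity is ever reintroduced), that each resolution step remains valid, and that the tailing condition of each intermediate universal reduction is not violated. All of these follow from $x$ being absent from $\pi_c$ and sitting uniquely at depth $1$ in $\FFF$.
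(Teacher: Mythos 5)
Your proposal is correct, and it is considerably more explicit than what the paper offers: the paper's entire ``proof'' is a one-line deferral to the completeness proof of ordinary Q\hy resolution in Kleine B\"uning, Karpinski, and Fl\"ogel, together with the remark that making those derivations strict and tree-like is straightforward. Your induction on the prefix length, with the lifting of a refutation of $\FFF[x \mapsto c]$ back to $\FFF$, is exactly the standard argument being alluded to, and your bookkeeping is sound: only one polarity of $x$ is ever reintroduced, so no tautologies arise; no step of $\pi_c$ mentions $x$, so resolution steps survive; and since $x$ is outermost, the tailing conditions and the strictness ordering are preserved when the root-level step on $x$ (resolution if existential, universal reduction if universal) is prepended. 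Two trivial points you should tidy up: the preimage clause of a leaf label need not be \emph{unique} (a clause $C$ of $F[x\mapsto c]$ may arise both from $C \in F$ and from $C \cup \{\ell\} \in F$), but any choice of preimage works; and in the existential case one of the two lifted roots may already be the empty clause, in which case that single subtree is the refutation and no resolution on $x$ is needed. Neither affects correctness; what your approach buys over the paper's is a self-contained argument that actually verifies the claim that strictness can be achieved, rather than asserting it.
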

\begin{proof}
  Completeness of ``ordinary'' Q-resolution is proved in
  \cite{KleinebuningKarpinskiFlogel95}. It is straightforward to turn the
  derivations used in this proof into strict, tree\hy like derivations.
\end{proof}

\section{Dependency Schemes}
For a binary relation~$\RRR$ over some set~$V$ we write $\RRR^*$ to denote the
reflexive and transitive \emph{closure} of~$\RRR$, i.e., the smallest
set~$\RRR^*$ such that~$\RRR^* = \RRR \cup \{ (x,x) : x \in V \} \cup \{ (x,y)
: \exists z\;\mbox{such that}\; (x,z) \in \RRR^* \;\mbox{and}\; (z,y) \in \RRR
\}$. Moreover, we let $\RRR(x) = \{y : (x,y) \in \RRR \}$ for~$x \in V$, and
$\RRR(X) = \bigcup_{x \in X} \RRR(x)$ for $X \subseteq V$. For a QCNF formula
$\FFF$, we define the binary relation $R_\FFF$ over~$\var(\FFF)$ as $R_{\FFF}
= \SB (x,y) \SM x,y \in \var(\FFF), \;\delta_{\FFF}(x) < \delta_{\FFF}(y)
\SE$. That is to say, $R_\FFF$ assigns to each variable~$x$ the variables on
the right of~$x$ in the prefix.

\begin{definition}[Shifting]
  Let~$\FFF$ be a QCNF formula and $X \subseteq \var(\FFF)$. We say the QCNF
  formula~$\FFF'$ is obtained from~$\FFF$ by~\emph{down-shifting}~$X$, in
  symbols $\FFF' = S^{\downarrow}(\FFF,X)$, if $\FFF'$~is obtained from~$\FFF$
  by quantifier reordering such that the following conditions hold:
  \begin{enumerate}
  \item $X = R_{\mathcal{F}'}(x)$ for some $x \in \var(\mathcal{F}) =
    \var(\mathcal{F}')$.
  \item $\delta_{\mathcal{F}'}(x) < \delta_{\mathcal{F}'}(y)$ if and
    only if $\delta_{\mathcal{F}}(x) < \delta_{\mathcal{F}}(y)$ for
    all~$x, y \in X$.
  \item $\delta_{\mathcal{F}'}(x) < \delta_{\mathcal{F}'}(y)$ if and only if
    $\delta_{\mathcal{F}}(x) < \delta_{\mathcal{F}}(y)$ for all~$x, y \in
    \var(\mathcal{F}) \setminus X$.
  \end{enumerate}
\end{definition}
For example, let $\FFF = \exists x \forall y \exists z \forall u
\forall w\: F$, and $X = \{x,z,u\}$. Then $S^{\downarrow}(\FFF,X) =
\forall y \forall w \exists x \exists z \forall u\: F$. Note that the
result of shifting is unique. In general, shifting does not yield an
equivalent formula.
\begin{definition}[Dependency scheme]
  A \emph{dependency scheme}~$D$ assigns to each QCNF formula~$\FFF$ a binary
  relation~$D_{\FFF} \subseteq R_{\FFF}$ such that $\FFF$
  and~$S^{\downarrow}(\FFF, D^*_{\FFF}(x))$ are equivalent for all $x \in
  \var(\FFF)$.  A dependency scheme~$D$ is \emph{tractable} if $D_{\FFF}$ can
  be computed in time that is polynomial in~$\Card{\FFF}$.
\end{definition}
Intuitively, for a QCNF formula $\FFF$, variable $x \in \var(\FFF)$,
and dependency scheme~$D$, the set $D_{\FFF}(x)$ consists of variables
that \emph{may} depend on $x$. More specifically, if we want to
simplify $\FFF$ by moving the variable $x$ to the rightmost position
in the prefix, we can use a dependency scheme to identify a set $X$ so
that down-shifting of $X \cup \{x\}$ preserves
satisfiability. Typically, we are interested in dependency schemes
that allow us to identify sound shifts for entire sets of variables.
\begin{definition}[Cumulative]
  \label{defn:cumul}
  A dependency scheme~$D$ is \emph{cumulative} if for every QCNF
  formula~$\mathcal{F}$ and set~$X \subseteq \var(\mathcal{F})$,
  $\mathcal{F}$ and~$S^\downarrow(\mathcal{F}, D^*_{\mathcal{F}}(X))$
  are equivalent.
\end{definition}
Cumulative dependency schemes play a crucial role in the context of
backdoor sets \cite{SamerSzeider09a}, and have been integrated in
search\hy based QBF solvers \cite{LonsingBiere2010}.

It is easy to verify that we can transpose adjacent quantifications
$\mathsf{Q}_x x \mathsf{Q}_yy$ in the prefix of a QCNF $\FFF$ as long as $y
\notin D_{\FFF}(x)$ for some dependency scheme~$D$. In other words, every
dependency scheme satisfies the property defined below.
\begin{definition}[Sound for transpositions]
  \label{def:transsnd} Let $D$ be a function that assigns to each QCNF formula
  $\FFF$ a binary relation $D_{\FFF} \subseteq R_{\FFF}$. We say $D$ is
  \emph{sound for transpositions} if any two QCNF formulas $\FFF =
  \mathsf{Q}_1x_1\dots$\hskip 0pt$\mathsf{Q}_rx_r
  \mathsf{Q}_{r+1}x_{r+1}$\hskip 0pt$ \dots \mathsf{Q}_nx_n F$ and
  $\mathsf{Q}_1x_1\dots$\hskip 0pt$\mathsf{Q}_{r+1}x_{r+1}$\hskip 0pt$
  \mathsf{Q}_rx_r$\hskip 0pt$ \dots\mathsf{Q}_nx_n F$ are equivalent given
  that $(x_r,x_{r+1}) \notin D_{\FFF}$.
\end{definition}
Further restrictions are required when going beyond individual transpositions:
let $\FFF = \forall x \exists y \exists z\:F$, where $F$ is the CNF encoding
of $z \leftrightarrow (x \vee y)$, and let $D$ be a mapping such that
$D(\FFF) = D_{\FFF} = \emptyset$ and $D(\FFF') = R_{\FFF'}$ for $\FFF' \neq
\FFF$. $\FFF$ is satisfiable and remains satisfiable after transposing $y$ and
$x$ (or $y$ and $z$) in the prefix. However, the formula $S^{\downarrow}(\FFF,
D^*_{\FFF}(x)) = \exists y \exists z \forall x\:F$ is unsatisfiable. So $D$ is
sound for transpositions but not a dependency scheme.
\begin{definition}[Continuous]
  \label{def:cont} Let $D$ be a function that maps each QCNF formula $\FFF$ to
  a binary relation $D_{\FFF} \subseteq R_{\FFF}$. We say $D$ is
  \emph{continuous} if the following holds for every pair $\FFF =
  \mathsf{Q}_1x_1\dots$\hskip 0pt$\mathsf{Q}_rx_r
  \mathsf{Q}_{r+1}x_{r+1}$\hskip 0pt$ \dots \mathsf{Q}_nx_n F$ and $\FFF' =
  \mathsf{Q}_1x_1\dots$\hskip 0pt$\mathsf{Q}_{r+1}x_{r+1}$\hskip 0pt$
  \mathsf{Q}_rx_r$\hskip 0pt$ \dots\mathsf{Q}_nx_n F$ of QCNF formulas:
  $D_{\FFF}(v) = D_{\FFF'}(v)$ for $v \in \var(\FFF) \setminus \{ x_r,
  x_{r+1}\}$, and $D_{\FFF'}(x_r) \subseteq D_{\FFF}(x_r)$ as well as
  $D_{\FFF'}(x_{r+1}) \supseteq D_{\FFF}(x_{r+1})$.
\end{definition}
\begin{restatable}{lemma}{LemTransSnd} \label{lem:transsnd}\shortversion{\textup{($\star$)}}
  Let $D$ be a function that maps each QCNF formula $\FFF$ to a binary
  relation $D_{\FFF} \subseteq R_{\FFF}$. If $D$ is sound for transpositions
  and continuous, then $D$ is a cumulative dependency scheme.
\end{restatable}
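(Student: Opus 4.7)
The plan is to establish the cumulative property directly; the (ordinary) dependency scheme property is then the special case $X = \{x\}$. Fix $\FFF$ and $X \subseteq \var(\FFF)$, set $Y := D^*_{\FFF}(X)$, and let $\FFF^* := S^{\downarrow}(\FFF, Y)$. The goal is to produce a sequence of QCNF formulas $\FFF = \FFF_0, \FFF_1, \dots, \FFF_m = \FFF^*$ where each $\FFF_{i+1}$ is obtained from $\FFF_i$ by transposing two adjacent quantifications, and to argue via soundness for transpositions that each step preserves equivalence.

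For the existence of such a sequence I will use a bubble-sort argument on the number of \emph{inversions}, i.e., pairs $(a,b)$ with $a < b$ and $x_a \in Y$, $x_b \notin Y$. As long as $\FFF_i \neq \FFF^*$ some inversion exists, and any inversion contains an adjacent one (if $(a,b)$ is an inversion with $b > a+1$, then either $(a,a+1)$ or $(a+1,b)$ is an inversion, and we recurse on the smaller gap). Swapping an adjacent inverted pair strictly reduces the number of inversions and preserves the relative orders inside $Y$ and inside $\var(\FFF) \setminus Y$, so after finitely many steps the process terminates at $\FFF^*$.

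The heart of the proof is to maintain throughout the sequence the invariant
\[
  D_{\FFF_i}(v) \subseteq Y \quad \text{for every } v \in Y.
\]
It holds at $i = 0$ because $Y = D^*_{\FFF}(X)$ is closed under $D_{\FFF}$ by definition of the reflexive\hy transitive closure. Suppose it holds at step $i$ and we swap an adjacent inverted pair $(x_r, x_{r+1})$ with $x_r \in Y$, $x_{r+1} \notin Y$. Then $x_{r+1} \notin D_{\FFF_i}(x_r)$, so $(x_r, x_{r+1}) \notin D_{\FFF_i}$ and soundness for transpositions yields $\FFF_i \equiv \FFF_{i+1}$. Continuity then propagates the invariant: $D_{\FFF_{i+1}}(v) = D_{\FFF_i}(v) \subseteq Y$ for $v \in Y \setminus \{x_r, x_{r+1}\}$, and $D_{\FFF_{i+1}}(x_r) \subseteq D_{\FFF_i}(x_r) \subseteq Y$, while $x_{r+1} \notin Y$ makes the invariant vacuous for $x_{r+1}$.

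The main conceptual step is identifying this invariant and recognizing that the asymmetric shrink/grow conditions in Definition~\ref{def:cont} --- dependencies of $x_r$ can only shrink and those of $x_{r+1}$ can only grow --- are precisely what is needed to keep $Y$ closed under $D$ as the reordering proceeds outward. Once the invariant is in place, the remainder is merely chaining the $m$ equivalences $\FFF_0 \equiv \FFF_1 \equiv \cdots \equiv \FFF_m$.
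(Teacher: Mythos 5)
Your proof is correct and follows essentially the same route as the paper's: decompose the down-shift into adjacent transpositions of a $Y$-element past a non-$Y$-element, and use continuity inductively to guarantee that each transposed pair is absent from the current dependency relation, so that soundness for transpositions applies at every step. The only difference is cosmetic --- you maintain the cleaner invariant that $Y$ stays closed under $D_{\FFF_i}$, whereas the paper tracks only the finitely many pairs that will still be transposed.
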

\longversion{\shortversion{\begin{sloppypar}}
\begin{proof} Choose an arbitrary QCNF formula $\FFF$. Any shift
  $S^{\downarrow}(\FFF,D^*_{\FFF}(X))$ can be represented as a series of
  transpositions of adjacent quantifications $\mathsf{Q}_v v\mathsf{Q}_w w$
  where $v~\in~D^*_{\FFF}(X)$ and $w \notin D^*_{\FFF}(X)$, because the order
  of elements within the sets $D^*_{\FFF}(X)$ and $\var(\FFF) \setminus
  D^*_{\FFF}(X)$ remains unchanged. In other words, there is a sequence
  $\FFF_1,\dots, \FFF_n$ with $\FFF_1 = \FFF$ and $\FFF_n =
  S^{\downarrow}(\FFF,D^*_{\FFF}(X))$, such that for $i~\in~\{1, \dots,
  n-1\}$, the formula $\FFF_{i+1}$ is obtained from $\FFF_{i}$ by transposing
  adjacent quantifications $\mathsf{Q}_{v_i} v_i$ and $\mathsf{Q}_{w_i} w_i$
  in the prefix of $\FFF_{i}$, where $v_i \in D^*_{\FFF}(X)$, $w_i \notin
  D^*_{\FFF}(X)$, and $\delta_{\FFF_i}(w_i) = \delta_{\FFF_i}(v_i) + 1$. We
  prove for all $k \in \{1,\dots,n-1\}$ that $D_{\FFF_k}$ contains no pair
  $(v_j,w_j)$ such that $k \leq j \leq n -1$. By induction on $k$. Because of
  $v_j \in D^*_{\FFF}(X)$ and $w_j \notin D^*_{\FFF}(X)$, we must have $(v_j,
  w_j) \notin D_{\FFF} = D_{\FFF_1}$ for $1 \leq j \leq n - 1$. For the
  induction step, suppose $(v_j,w_j) \notin D_{\FFF_k}$ for all $j \in
  \{k,\dots,n-1\}$, where $1 \leq k \leq n-1$. Since $D$ is continuous, all
  pairs in $D_{\FFF_{k+1}} \setminus D_{\FFF_k}$ must be of the form $(w_k,
  x)$ for some $x \in \var(\FFF_{k+1})$. No such pair can be identical to any
  pair $(v_j, w_j)$, because $w_k \notin D^*_{\FFF}(X)$, while $v_j \in
  D^*_{\FFF}(X)$, where $j \in \{1,\dots, n-1\}$. We conclude that $(v_j,w_j)
  \notin (D_{\FFF_{k+1}} \setminus D_{\FFF_{k}}) \cup D_{\FFF_k} =
  D_{\FFF_{k+1}}$.  The lemma now follows from the fact that $D$ is sound for
  transpositions and $(v_k,w_k)~\notin~D_{\FFF_k}$ for all $k \in \{1,\dots,
  n-1\}$.
\end{proof}
\shortversion{\end{sloppypar}}
}
\begin{restatable}{lemma}{LemSubCum} \label{lem:subcum}\shortversion{\textup{($\star$)}}
  Let $D'$ be a function that maps each QCNF formula $\FFF$ to a binary
  relation $D'_{\FFF} \subseteq R_{\FFF}$, and let $D$ be a cumulative
  dependency scheme. If $D_{\FFF} \subseteq D'_{\FFF}$ for all formulas
  $\FFF$, then $D'$ is a cumulative dependency scheme as well.
\end{restatable}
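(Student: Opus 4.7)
The plan is to reduce the cumulativity of $D'$ directly to the cumulativity of $D$ by showing that, for any target set, the closures $D^*_\FFF$ and $(D')^*_\FFF$ collapse onto the same set. Concretely, given a QCNF formula $\FFF$ and $X \subseteq \var(\FFF)$, I would set $Y := (D')^*_{\FFF}(X)$ and aim to establish the identity $D^*_\FFF(Y) = Y$, so that cumulativity of $D$ applied with argument $Y$ yields $\FFF \equiv S^\downarrow(\FFF, D^*_\FFF(Y)) = S^\downarrow(\FFF, Y) = S^\downarrow(\FFF, (D')^*_\FFF(X))$, which is exactly the cumulativity of $D'$.

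The key step is therefore verifying that $Y$ is closed under $D_\FFF$. By construction, $Y$ is closed under $D'_\FFF$ (every element reachable from $X$ via $D'_\FFF$ lies in $Y$), and the hypothesis $D_\FFF \subseteq D'_\FFF$ immediately transfers this closure property: if $y \in Y$ and $(y,z) \in D_\FFF$ then $(y,z) \in D'_\FFF$, so $z \in Y$. Combined with reflexivity of the closure operator (which gives $Y \subseteq D^*_\FFF(Y)$), closedness under $D_\FFF$ yields $D^*_\FFF(Y) \subseteq Y$, hence equality.

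Once this identity is in hand, a single invocation of the cumulativity of $D$ closes the argument. Note that the dependency-scheme property for $D'$ is just the special case $X = \{x\}$ of cumulativity, so no separate verification is needed.

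I do not anticipate a genuine obstacle here; the proof is essentially a one-line observation once one notices that $(D')^*_\FFF(X)$ is, in particular, $D_\FFF$-closed. The only point that requires minor care is articulating the routine fact that the reflexive transitive closure of a binary relation applied to a set closed under that relation returns the set itself, but this is standard and can be stated in a single sentence.
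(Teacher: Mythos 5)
Your proposal is correct and is essentially the paper's own proof: the paper likewise applies cumulativity of $D$ to the set $D'^*_{\FFF}(X)$ after observing that $D^*_{\FFF}(D'^*_{\FFF}(X)) = D'^*_{\FFF}(X)$, which is exactly the closure identity you verify. You merely spell out the one-line justification (closure under $D'_{\FFF}$ implies closure under $D_{\FFF}$) that the paper leaves implicit.
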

\longversion{ \longversion{\begin{sloppypar}}
\begin{proof}
  Choose an arbitrary QCNF formula $\FFF$, and let $X \subseteq
  \var(\FFF)$. From $D_{\FFF}\subseteq D'_{\FFF}$ it follows that
  $D'^*_{\FFF}(X) = D^*_{\FFF}(D'^*_{\FFF}(X))$. Since $D$ is a cumulative
  dependency scheme, the formulas $\FFF$ and $S^\downarrow(\FFF,
  D^*_{\FFF}(D'^*_{\FFF}(X))) = S^\downarrow(\FFF, D'^*_{\FFF}(X))$ are
  equivalent.
\end{proof}
\longversion{\end{sloppypar}}
}
\section{Resolution\hy Path Dependencies}
In this section, we will define the resolution path dependency \emph{scheme},
which corresponds to the resolution\hy path dependency relation proposed by
Van Gelder \cite{VanGelder11}. We justify this change of name by proving that
the resolution path dependency scheme is indeed a cumulative dependency
scheme.

Van Gelder \cite{VanGelder11} gives two definitions for resolution paths
(Definitions 4.1 and 5.2), the former being more restrictive than the
latter. The former definition is problematic as we will explain in Example
\ref{ex:nontaut} below. Hence we will base our considerations on the latter
definition, which defines resolution paths as certain walks in a graph
associated with a QBF formula. However, to avoid clashes with graph\hy
theoretic terminology introduced below, we simply define resolution paths as
particular sequences of clauses and literals.

\begin{definition}[Resolution Path] \label{def:rconnected} Let $\FFF$ be a
  QCNF formula with clause set $F$ and $X \subseteq \var_{\exists}(\FFF)$. An
  \emph{$X$\hy resolution path} in $\FFF$ is a sequence of clauses and
  literals $\ell_1,C_1,\ell'_1,\ell_2,C_2,\ell'_2,\dots,$\hskip
  0pt$\ell_n,C_n,\ell'_n$, satisfying the following properties:
  \begin{enumerate}
  \item $C_i \in F$ and $\ell_i, \ell'_i \in \lit(\FFF)$ for $i \in
      \{1, \dots,n\}$.
  \item $\ell_i, \ell'_i \in C_i$ for $i \in \{1, \dots,n\}$.
  \item $\ell_{i+1}= \overline{\ell'_{i}}$ and $\ell'_i, \ell_{i+1} \in X \cup
    \overline{X}$, for $i \in \{1, \dots,n-1\}$.
  \item $\var(\ell_i) \neq \var(\ell'_i)$ for $i \in \{1, \dots,n\}$, and
    $\ell_1 \neq \ell_n'$.
\end{enumerate}
If $\ell_1, \dots, \ell_n'$ is an $X$\hy resolution path in $\FFF$, we say
that $\ell_1$ and $\ell_n'$ are \emph{resolution connected in $\FFF$ with
  respect to $X$}.
\end{definition}

\begin{example}\label{ex:running} \longversion{Consider the following QCNF formula:
\begin{align*}
    \FFF = \exists y_1 \exists y_2 \forall x_1 \exists y_3 \forall x_2
    \underbrace{(x_1 \vee x_2 \vee y_2 \vee y_1)}_{C_1} &\wedge
    \underbrace{(\neg x_1 \vee \neg y_2 \vee \neg y_1 )}_{C_2} \shortversion{\\ &}\wedge
    \underbrace{(\neg y_1 \vee \neg y_3)}_{C_3} \wedge \underbrace{(\neg y_1
      \vee y_3)}_{C_4}
  \end{align*}} 
\shortversion{Let $\FFF = \exists y_1 \exists y_2 \forall x_1 \exists y_3
  \forall x_2\: C_1 \wedge C_2 \wedge C_3 \wedge C_4$, where $C_1 = (x_1 \vee
  x_2 \vee y_2 \vee y_1),$ \hskip 0pt $ C_2 = (\neg x_1 \vee \neg y_2 \vee \neg
  y_1 ),$ \hskip 0pt $ C_3= (\neg y_1 \vee \neg y_3)$, and $C_4 = (\neg y_1 \vee
  y_3)$.}

The sequence $x_1,C_1,y_1,\neg y_1,C_4, y_3$ is a $\{ y_1 \}$\hy resolution
path in $\FFF$, and so the literals $x_1$ and $\neg y_3$ are resolution connected
with respect to $\{ y_1 \}$. By contrast, the sequence $\neg x_1, C_2, \neg
y_1, C_3, \neg y_3$ is not a resolution path in $\FFF$, because $\neg y_1$ is
followed by a clause instead of the complementary literal $y_1$.
\end{example}
\emph{Resolution path dependencies} are induced by a pair of resolution paths
that connect the same two variables in reverse polarities:
\begin{definition}[Dependency pair] Let $\FFF$ be a QCNF formula and
  $x, y \in \var(\FFF)$. We say $(x,y)$ is a \textit{resolution\hy
    path dependency pair} in $\FFF$ with respect to $X \subseteq
  \var_{\exists}(\FFF)$ if at least one of the following conditions
  holds:
  \begin{itemize}
  \item $x$ and $y$, as well as $\neg x$ and $\neg y$, are resolution connected in
    $\FFF$ with respect to $X$.
  \item $x$ and $\neg y$, as well as $\neg x$ and $y$, are resolution connected in
    $\FFF$ with respect to $X$.
  \end{itemize}
\end{definition}

\begin{definition}[Resolution\hy path dependency scheme]
  \label{def:resdep} The \textit{resolution\hy path dependency scheme} is a
  mapping $\Dres$ that assigns to each QCNF formula $\FFF$ the relation
  $\Dres_{\FFF} = \SB (x,y) \in R_{\FFF} \SM q_{\FFF}(x) \neq q_{\FFF}(y)$ and
  $(x,y)$ is a resolution\hy path dependency pair in $\FFF$ with respect to
  $R_{\FFF}(x) \setminus (\var_{\forall}(\FFF) \cup \{x,y\}) \SE$.
\end{definition}
In the formula $\FFF$ of Example 1 above, $(y_1,x_1)$ is resolution\hy path
dependency pair with respect to $\emptyset$, and $(x_1,y_3)$ is a
resolution\hy path dependency pair with respect to $\{y_1,y_2\}$. But while
$(y_1,x_1) \in \Dres_{\FFF}$, we have $(x_1,y_3) \notin \Dres_{\FFF}$, because
$\neg x_1$ is not resolution connected in $\FFF$ to either of $y_3$ or $\neg
y_3$ with respect to $R_{\FFF}(x_1) \setminus \{y_3 \} = \emptyset$.

The next lemma will be needed in the proof of Theorem \ref{thm:resdep1} below.
\begin{restatable}[\cite{VanGelder11}]{lemma}{LemClauseToPath}\label{lem:ClauseToPath}\shortversion{\textup{($\star$)}}
  Let $\FFF$ be QCNF formula, $ \ell, \ell' \in \lit(\FFF)$ where $\ell \neq
  \ell'$, and $\pi = (T,\lambda)$ a regular, tree-like Q-resolution derivation
  of a clause $D$ such that $\ell, \ell' \in D$. Then $\ell$ and $\ell'$ are
  resolution connected in $\FFF$ with respect to $\mathit{resvar}(\pi)$.
\end{restatable}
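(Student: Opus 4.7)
The plan is to prove the lemma by induction on the height of the derivation tree $T$. The base case is immediate: $T$ consists of a single leaf labeled with a clause $D \in F$ containing both $\ell$ and $\ell'$, and the length-one sequence $\ell, D, \ell'$ is itself a resolution path with respect to any set $X$. Conditions 1--3 of Definition~\ref{def:rconnected} are trivial or vacuous, and condition~4 follows from non-tautology of $D$ together with the hypothesis $\ell \neq \ell'$.

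For the inductive step I case-split on the derivation step at the root $t$. If it is a universal reduction with single child $t'$, then $\ell, \ell' \in \lambda(t')$, and the inductive hypothesis applied to $\pi|_{t'}$ yields a path with respect to $\mathit{resvar}(\pi|_{t'}) \subseteq \mathit{resvar}(\pi)$; enlarging the variable set only enlarges the class of valid resolution paths, so this suffices. If the root step is a resolution on an existential literal $m$ with children $t_1, t_2$, where $m \in \lambda(t_1)$ and $\overline{m} \in \lambda(t_2)$, I sub-split on where $\ell$ and $\ell'$ reside. If both lie in a single $\lambda(t_i)$, induction on $\pi|_{t_i}$ finishes the case. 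Otherwise, without loss of generality $\ell \in \lambda(t_1)$ and $\ell' \in \lambda(t_2)$, and since $\lambda(t)$ is non-tautological we also have $\ell, \ell' \notin \{m, \overline{m}\}$. I then invoke the inductive hypothesis separately on the pair $(\ell, m)$ in $\pi|_{t_1}$ and on $(\overline{m}, \ell')$ in $\pi|_{t_2}$, obtaining resolution paths $P_1$ from $\ell$ to $m$ and $P_2$ from $\overline{m}$ to $\ell'$, and concatenate them along the transition $m \to \overline{m}$.

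The point requiring verification is that this concatenation is a genuine $\mathit{resvar}(\pi)$-resolution path. The new transition at the join satisfies condition~3 of Definition~\ref{def:rconnected} because $\var(m) \in \mathit{resvar}(\pi)$ by construction of $\pi$; condition~4 at the outer endpoints is precisely the hypothesis $\ell \neq \ell'$; and all interior conditions carry over from $P_1$ and $P_2$. I expect the main minor obstacle to be the bookkeeping in the resolution case: one must confirm in each sub-case that the two literals passed to a recursive call are genuinely distinct (which relies on non-tautology of the relevant premise) and that the merged sequence still meets condition~4 at the endpoints, so that the induction goes through cleanly.
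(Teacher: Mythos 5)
Your proof is correct and follows essentially the same route as the paper's: induction on the height of $\pi$, with the universal-reduction case handled by monotonicity of resolution connectedness in $X$ and the resolution case by splicing the two paths obtained from the subderivations at the pivot pair $v,\overline{v}$. The only cosmetic difference is that the paper additionally invokes regularity of $\pi$ when verifying property~4 of Definition~\ref{def:rconnected}, whereas you rely directly on the endpoint condition $\ell \neq \ell'$ together with the per-clause conditions inherited from the two subpaths, which indeed suffices under the walk-based definition of resolution paths used here.
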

\longversion{\begin{proof} By induction on the height $n$ of $\pi$. For $n = 0$, $D$ must
  already be contained in $F$, and $\ell,D,\ell'$ is an $\emptyset$\hy
  resolution path in $\FFF$. Now assume the lemma holds for all $m \in
  \{1,\dots,n-1\}$. Let $r$ denote the root of $\pi$. We have to consider two
  cases. (1) If $r$ has a single child $t$, then $\lambda(r) = D$ is the
  result of universal reduction of $\lambda(t)$, and $\lambda(t)$ must already
  contain $\ell$ and $\ell'$. Let $\pi' = (T',\lambda)$, where $T'$ is the
  subtree of $T$ rooted at $t$. Evidently, $\pi'$ is a regular, tree\hy like
  Q\hy resolution derivation of $\lambda(t)$ whose height is strictly smaller
  than that of $\pi$, so we can apply the induction hypothesis and conclude
  that $\ell$ and $\ell$ are resolution connected in $\FFF$ with respect to
  $\mathit{resvar}(\pi') \subseteq \mathit{resvar}(\pi)$.  (2) Suppose $r$ has
  two child nodes $t'$ and $t''$. Then $\lambda(r) = D$ is the resolvent of
  $\lambda(t')$ and $\lambda(t'')$ on some variable $v = \lambda(rt') =
  \lambda(rt'')$. Let $T'$ and $T''$ denote the subtrees of $T$ rooted at $t'$
  and $t''$, respectively, and set $\pi' = (T',\lambda)$, $\pi'' =
  (T'',\lambda)$. If $\ell, \ell' \in \lambda(t')$ or $\ell, \ell' \in
  \lambda(t'')$, we can apply the same reasoning as in case (1). So assume,
  without loss of generality, that $\ell,v \in \lambda(t')$ and $\neg v,\ell'
  \in \lambda(t'')$. Since $\pi'$ and $\pi''$ are regular and tree-like and
  have height at most $n - 1$, we can conclude from the induction hypothesis
  that $\ell$ and $v$ must be resolution connected in $\FFF$ with respect to
  $\mathit{resvar}(\pi')$, and that $\neg v$ and $\ell'$ must be resolution
  connected in $\FFF$ with respect to $\mathit{resvar}(\pi'')$. That means
  there must be a $\mathit{resvar}(\pi')$\hy resolution path $p' =
  \ell,C_1,\ell_1',\dots,\ell_n,C_n,v$, as well as a
  $\mathit{resvar}(\pi'')$\hy resolution path $p'' = \neg
  v,C'_1,\jmath_1',\dots,\jmath_n,C'_n,\ell'$ in $\FFF$. We claim that the
  sequence $p = \ell,C_1,\ell_1',\dots,\ell_n,C_n,v,$\hskip 0pt$\neg
  v,C'_1,\jmath_1',\dots,\jmath_n,C'_n,\ell'$ is a $\mathit{resvar}(\pi)$\hy
  resolution path between $\ell$ and $\ell'$: it is easy to check that
  properties 1-3 of Definition \ref{def:rconnected} are satisfied by $p$
  because they are satisfied by $p'$ and $p''$ individually. Since $\pi$ is
  regular, we must have $v \notin
  \mathit{resvar}(\pi')~\cup~\mathit{resvar}(\pi'')$, and so $p$ has property
  4 as well. It follows that $\ell$ and $\ell'$ are resolution connected in
  $\FFF$ with respect to $\mathit{resvar}(\pi)$.
\end{proof}
} The following result corresponds to
Theorem 4.7 in \cite{VanGelder11}.  \longversion{We were unable to follow the
  proof presented there without assuming that Q\hy resolution derivations are
  strict, so we include our own version below.}
\begin{restatable}[\cite{VanGelder11}]{theorem}{ThmResDep}\label{thm:resdep1}\shortversion{\textup{($\star$)}}
  Let $\FFF$ be a QCNF formula where $\forall u$ is followed by $\exists e$ in
  the quantifier prefix, so that $\delta_{\FFF}(e) = \delta_{\FFF}(u) +
  1$. Suppose $(u,e) \notin \Dres_{\FFF}$. Let $\FFF'$ be the result of
  transposing $\exists e$ and $\forall u$ in the quantifier prefix. Then
  $\FFF'$ and $\FFF$ are equivalent.
\end{restatable}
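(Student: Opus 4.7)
The plan is to prove both implications of the equivalence using Q-resolution refutations, guaranteed to exist for unsatisfiable formulas by Theorem \ref{ThmStrictQres}. The easy direction is $\FFF$ unsatisfiable implies $\FFF'$ unsatisfiable: any Q-resolution refutation of $\FFF$ transfers unchanged to $\FFF'$. The transposition only changes the depths of $u$ and $e$, and existential resolution is unaffected. Universal reduction on $u$ becomes strictly easier since $u$ moves deeper. Reduction on any other universal $v$ remains valid because $v$'s depth is preserved, and the only existential whose depth decreases, namely $e$, cannot violate $v$'s tailing condition: if $e$ occurs in the clause, validity in $\FFF$ already required $\delta_\FFF(v) > \delta_\FFF(e) > \delta_\FFF(u) = \delta_{\FFF'}(e)$.

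For the substantive direction, I would prove the contrapositive: $\FFF'$ unsatisfiable implies $\FFF$ unsatisfiable, using $(u,e) \notin \Dres_\FFF$. Take a strict tree-like refutation $\pi' = (T, \lambda)$ of $\FFF'$; strictness implies regularity of every subderivation, so Lemma \ref{lem:ClauseToPath} applies. Call a universal reduction step in $\pi'$ \emph{bad} if it reduces a $u$-literal in a clause that also contains an $e$-literal; these are exactly the steps invalid when $\pi'$ is reinterpreted over $\FFF$. If $\pi'$ has no bad step, it is already a refutation of $\FFF$, so assume otherwise. Let $X = R_\FFF(u) \setminus (\var_\forall(\FFF) \cup \{u,e\})$, the set of existential variables of $\FFF$-depth above $\delta_\FFF(e)$. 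At any bad step at a node $t$ with child $t'$, reducing a $u$-literal $\ell_u$ from $\lambda(t')$ which contains an $e$-literal $\ell_e$, strictness forces every edge label in the subderivation rooted at $t'$ to have $\FFF'$-depth above $\delta_{\FFF'}(u) = \delta_\FFF(e)$, so the resolved variables of this subderivation lie in $X$. Lemma \ref{lem:ClauseToPath} then produces an $X$-resolution path in $\FFF$ connecting $\ell_u$ and $\ell_e$.

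The main obstacle is upgrading the single polarity connection obtained from one bad step to the paired connections required by $(u,e) \in \Dres_\FFF$, namely a matching of the form $\{(u,e),(\neg u,\neg e)\}$ or $\{(u,\neg e),(\neg u,e)\}$. Since every $u$-literal appearing in a leaf of $\pi'$ can only be eliminated by universal reduction (as $u$ is universal and not subject to existential resolution), bad steps must cover both polarities of $u$ whenever they occur. My plan is to analyze how $u$- and $e$-literals of various polarities interact across the leaves and subderivations of $\pi'$, and to select bad steps whose $e$-polarities align into one of the two admissible matchings, thereby producing a resolution-path dependency pair contradicting the assumption $(u,e) \notin \Dres_\FFF$. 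This combinatorial step is the main technical difficulty.
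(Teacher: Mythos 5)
Your setup is sound and shares most of the paper's ingredients: strict tree-like refutations via Theorem~\ref{ThmStrictQres}, the observation that strictness yields regularity so that Lemma~\ref{lem:ClauseToPath} applies, the identification of universal reductions on $u$ of clauses containing an $e$-literal as the only problematic steps, and the depth argument placing the resolved variables of the relevant subderivations inside $X$. The easy direction is also handled correctly. However, the step you defer as ``the main technical difficulty'' is a genuine gap, and the strategy you sketch for closing it cannot succeed: the presence of a bad step does \emph{not} imply $(u,e)\in\Dres_{\FFF}$, so no selection of bad steps can be guaranteed to assemble a dependency pair. Concretely, for $\FFF=\forall u\exists e\,(u\vee e)\wedge(\neg e)$ we have $(u,e)\notin\Dres_{\FFF}$ because $\neg u$ occurs in no clause, yet the unique strict refutation of $\FFF'$ reduces $(u\vee e)$ on $u$ --- a bad step. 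What is needed instead is a \emph{transformation} of the refutation that removes bad steps one at a time by performing the resolution on $e$ first and the universal reduction on $u$ afterwards. The hypothesis $(u,e)\notin\Dres_{\FFF}$ enters only to show that this swap is always possible: if the clause $D$ with which the reduced clause is subsequently resolved on $e$ contained $\overline{\ell_u}$, then Lemma~\ref{lem:ClauseToPath} applied to the subderivations of $C\supseteq\{\ell_u,\ell_e\}$ and $D\supseteq\{\overline{\ell_u},\overline{\ell_e}\}$ would give resolution connections in both polarities, hence a dependency pair, a contradiction; so $\overline{\ell_u}\notin D$ and the swapped derivation is a legal Q-resolution derivation.

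A second missing ingredient is what makes this local surgery well-defined. The paper does not argue on $\FFF'$ directly; it first fixes an arbitrary assignment $\tau$ to all variables left of $u$ and shows that $\FFF[\tau]$ and $\FFF'[\tau]$ are equisatisfiable. In $\FFF'[\tau]$ the variable $e$ is the shallowest existential, so the tailing condition forces the reduced clause to be exactly $\{\ell_u,\ell_e\}$ (after discarding deeper universal literals), and strictness forces the resolution on $e$ to be the topmost resolution step, immediately adjacent to the bad reduction, with a partner $D$ containing no existential literal besides $\overline{\ell_e}$. Working with $\FFF'$ itself, as you propose, the reduced clause may carry existential literals quantified to the left of $u$, the resolution eliminating $\ell_e$ may occur far from the bad step, and the reordering argument no longer goes through. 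Adopting the restriction to $\FFF[\tau]$ together with the swap-based elimination of bad steps, your resolution-path bookkeeping for the subderivations then completes the proof.
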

\longversion{\begin{proof} 
  \shortversion{\begin{sloppypar}} It is sufficient to show that for all truth
    assignments $\tau$ with domain $\SB x~\in~\var(\FFF) \SM \delta_{\FFF}(x)
    < \delta_{\FFF}(u) \SE$, the formula $\FFF[\tau]$ has a Q\hy resolution
    refutation if and only if $\FFF'[\tau]$ has a Q\hy resolution
    refutation. Note that $(u,e) \in \Dres_{\FFF[\tau]}$ implies $(u, e) \in
    \Dres_{\FFF}$ because every resolution path in $\FFF[\tau]$ is a
    resolution path in $\FFF$.  
  \shortversion{\end{sloppypar}} Suppose
  $\FFF'[\tau]$ is unsatisfiable, and let $\pi'$ be a strict, tree\hy like
  Q\hy resolution refutation of $\FFF'[\tau]$. The only derivation step
  admissible in $\pi'$ that cannot occur in a refutation of $\FFF[\tau]$ is
  universal reduction on $u$ of a clause that contains $e$ or $\neg e$. If
  $\pi'$ contains no such step, $\pi'$ is already a refutation of $\FFF[\tau]$
  and we are done. Otherwise, suppose universal reduction on $u$ is applied to
  a clause $C \supseteq \{\ell_e,\ell_u\}$ in $\pi'$, where $\ell_e \in
  \{e,\neg e\}$ and $\ell_u \in \{u, \neg u\}$. We will construct a strict,
  tree\hy like Q\hy resolution refutation $\pi$ that contains one less
  application of universal reduction on $u$ of a clause that contains $e$ or
  $\neg e$. The literal $\ell_u$ is tailing in $C$, so $C$ does not contain
  existential literals of depth greater than
  $\delta_{\FFF'[\tau]}(u)$. Without loss of generality, we can further assume
  that $C$ contains no universal literals of depth greater than
  $\delta_{\FFF'[\tau]}(u)$, so $C = \SB \ell_u, \ell_e \SE$. Let $C' = \{
  \ell_e \}$ be the result of universal reduction of $C$ on $u$, and $D
  \supseteq \{\overline{\ell_e}\} $ be the clause $C'$ is resolved with in
  $\pi'$. Since $\pi'$ is strict, $D$ cannot contain any existential literal
  other than $\overline{\ell_e}$, and we can again assume that there are no
  universal literals in $D$ of depth greater than
  $\delta_{\FFF'[\tau]}(u)$. Moreover, we cannot have $\overline{\ell_u}
  \notin D$. Otherwise -- since every strict, tree\hy like resolution
  refutation is regular -- we could apply Lemma~\ref{lem:ClauseToPath} to
  obtain $(e,u) \in \Dres_{\FFF'}$ and thus $(u,e) \in \Dres_{\FFF}$, a
  contradiction. That is, we either have $D = \{\overline{\ell_e}, \ell_u\}$
  or $D = \{\overline{\ell_e}\}$. To obtain $\pi$, we first resolve $C$ and
  $D$ to derive $\{\ell_u\}$, and then apply universal reduction. Since $\pi'$
  is strict and tree\hy like, it is easily verified that $\pi$ must be as
  well.

  For the converse direction, observe that every Q\hy resolution refutation of
  $\FFF[\tau]$ is also a refutation of~$\FFF'[\tau]$.
\end{proof}
}
With the next example, we illustrate the importance of allowing consecutive
clauses with a tautological Q\hy resolvent in the definition of resolution
paths.
\begin{example}\label{ex:nontaut} \longversion{Consider the following QCNF formula:
  \[ \GGG = \forall u \exists e \exists v \forall x \exists y \exists z
  \underbrace{(u \vee y)}_{C'_1} \wedge \underbrace{(\neg y \vee \neg x \vee
    v)}_{C'_2} \wedge\hskip 0pt \underbrace{(\neg v \vee x \vee z)}_{C'_3}
  \wedge \underbrace{(\neg z \vee e)}_{C'_4} \wedge \underbrace{(\neg u \vee
    \neg e)}_{C'_5} \]}
\shortversion{Let $\GGG = \forall u \exists e \exists v \forall x \exists y
  \exists z \:C'_1 \wedge C'_2 \wedge C'_3 \wedge C'_4 \wedge C'_5$, where
  $C'_1 = (u \vee y)$, $C'_2 = (\neg y \vee \neg x \vee v)$, $C'_3 = (\neg v
  \vee x \vee z)$, $C'_4 = (\neg z \vee e)$, and $C'_5 = (\neg u \vee \neg
  e)$.}
 Figure \ref{fig:resolutionderivation} shows a Q\hy resolution
derivation of the clause $(u \vee e)$ from $\GGG$. By
Lemma~\ref{lem:ClauseToPath}, there must be a $\{v,y,z\}$-resolution path in
$\GGG$ connecting $u$ and $e$, and indeed it is straightforward to check that
$u,C'_1,y,\neg y,C'_2,v,\neg v, C'_3, z,\neg z, C'_4, e$ is a resolution
path. The literals $\neg u$ and $\neg e$ are trivially resolution connected,
so $(u,e)$ is a resolution path dependency pair with respect to $\{v,y,z\}$,
and $(u,e) \in \Dres_{\FFF}$. This is a genuine dependency: it is easily
verified that switching $\forall u$ and $\exists e$ in the prefix of $\GGG$
results in a formula that is unsatisfiable, while $\GGG$ itself is
satisfiable.

  Note that the clauses $C'_2$ and $C'_3$ do not have a non\hy tautological
  resolvent. All resolution paths in $\GGG$ between $u$ and $e$ lead through
  $C'_2$ and $C'_3$. Consequently, if we would restrict Definition
  \ref{def:rconnected} so as to require consecutive clauses in a resolution
  path to have a non\hy tautological Q\hy resolvent (as in Definition 4.1 of
  \cite{VanGelder11}), $u$ and $e$ would no longer be resolution connected in
  $\GGG$, and $e$ would not be identified as dependent on $u$.
\end{example}

\shortversion{\begin{figure}[b]}
\longversion{\begin{figure}}
\begin{center}
\begin{tikzpicture} [yscale = 0.9,grow=up, level 1/.style={sibling distance=50mm},
 level 2/.style={sibling distance=20mm}, scale=0.8]
  \tikzstyle{edge from parent}=[draw,thick] 
    \node {$u \vee e$}
      child {node {$\neg v \vee e$} 
        child {node {$\neg v \vee x \vee e$} 
          child {node {$\neg z \vee e$}
            edge from parent
            node[below right] {$z$}
          }
          child {node {$\neg v \vee x \vee z$}
            edge from parent
            node[below left] {$z$}
          }
          edge from parent
            node[left] {$x$}
      }
      edge from parent
         node[below right] {$v$}
    }
     child {node {$u \vee v$} 
        child {node {$u \vee \neg x \vee v$} 
          child {node {$\neg y \vee \neg x \vee v$}
            edge from parent
            node[below right] {$y$}
          }
          child {node {$u \vee y$}
            edge from parent
            node[below left] {$y$}
          }
          edge from parent
            node[left] {$x$}
      } edge from parent
            node[below left] {$v$}
      };
\end{tikzpicture}
\end{center}
\caption{Q\hy resolution derivation of $u \vee e$ from $\GGG$} 
\label{fig:resolutionderivation}
\end{figure}
\begin{theorem}\label{thm:DresDep}
  $\Dres$ is a cumulative dependency scheme.
\end{theorem}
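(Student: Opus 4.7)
The plan is to invoke Lemma~\ref{lem:transsnd}: it suffices to prove that $\Dres$ is both sound for transpositions and continuous. The key observation used throughout is that resolution paths, quantifier types $q$, and the set $\var_\forall$ are invariant under prefix reorderings---only $R_\FFF$ and hence the witness set $R_\FFF(x) \setminus (\var_\forall(\FFF) \cup \{x,y\})$ appearing in Definition~\ref{def:resdep} is affected by a swap.

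For soundness for transpositions, consider swapping adjacent $\mathsf{Q}_r x_r$ and $\mathsf{Q}_{r+1} x_{r+1}$ under the assumption $(x_r,x_{r+1}) \notin \Dres_\FFF$. If $\mathsf{Q}_r = \mathsf{Q}_{r+1}$, the assumption is automatic (since $\Dres$ relates only variables of opposite quantifier type) and equivalence of the two formulas is a well-known property of QBFs. If $\mathsf{Q}_r = \forall$ and $\mathsf{Q}_{r+1} = \exists$, this is precisely Theorem~\ref{thm:resdep1}. For the remaining case $\mathsf{Q}_r = \exists$, $\mathsf{Q}_{r+1} = \forall$, the idea is to apply Theorem~\ref{thm:resdep1} in the reverse direction, namely to $\FFF'$ (which has $\forall x_{r+1}$ followed by $\exists x_r$) and its swap $\FFF$. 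The hypothesis $(x_r,x_{r+1}) \notin \Dres_\FFF$ must then be translated into $(x_{r+1},x_r) \notin \Dres_{\FFF'}$. For this I would observe that the two witness sets $R_\FFF(x_r) \setminus (\var_\forall(\FFF) \cup \{x_r,x_{r+1}\})$ and $R_{\FFF'}(x_{r+1}) \setminus (\var_\forall(\FFF') \cup \{x_{r+1},x_r\})$ coincide---both equal the existential variables of depth strictly greater than $r+1$---and that reversing the sequence of a resolution path yields a resolution path, so ``being an RPD pair'' is symmetric in its endpoints.

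For continuity, one examines how $R$ changes under a single swap. For any $v \notin \{x_r,x_{r+1}\}$, $R_\FFF(v) = R_{\FFF'}(v)$ and the witness sets for each target $y$ agree, so $\Dres_\FFF(v) = \Dres_{\FFF'}(v)$. For $v = x_r$, $R_{\FFF'}(x_r) = R_\FFF(x_r) \setminus \{x_{r+1}\}$; thus for every target $y$ the witness set can only shrink when passing from $\FFF$ to $\FFF'$, and since every $X'$-resolution path with $X' \subseteq X$ is also an $X$-resolution path, membership of $(x_r,y)$ with respect to the smaller set is a stronger condition. This gives $\Dres_{\FFF'}(x_r) \subseteq \Dres_\FFF(x_r)$. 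The case $v = x_{r+1}$ is symmetric: $R_{\FFF'}(x_{r+1}) = R_\FFF(x_{r+1}) \cup \{x_r\}$, the witness set can only grow, and $\Dres_\FFF(x_{r+1}) \subseteq \Dres_{\FFF'}(x_{r+1})$.

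The main obstacle is the $\exists\forall$ case of transposition soundness, since Theorem~\ref{thm:resdep1} is stated only for the $\forall\exists$ direction; overcoming it hinges on the identification of the two witness sets together with the symmetry of resolution connection, both of which rest on the fact that the definition of a resolution path refers only to the clause set. With both properties established, Lemma~\ref{lem:transsnd} immediately delivers that $\Dres$ is a cumulative dependency scheme.
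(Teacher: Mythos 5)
Your proposal is correct and follows essentially the same route as the paper: establish continuity and soundness for transpositions (with the same three-way case split on quantifier types, reducing the $\exists\forall$ case to Theorem~\ref{thm:resdep1} applied to the swapped formula) and then invoke Lemma~\ref{lem:transsnd}. Your explicit justification that the two witness sets coincide and that resolution connectedness is symmetric is a correct elaboration of the step the paper states without proof (``Because of $(x,y) \notin \Dres_{\FFF}$, we must have $(y,x) \notin \Dres_{\FFF'}$'').
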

\begin{proof} We prove that $\Dres$ is (a) continuous and (b) sound for
  transpositions. The result then follows by Lemma~\ref{lem:transsnd}. (a) Let
  $\FFF$ and $\FFF'$ be QCNF formulas such that $\FFF'$ is obtained from
  $\FFF$ by quantifier reordering. Let $x \in \var(\FFF) = \var(\FFF')$, and
  $P = R_{\FFF}(x) \setminus (\var_{\forall}(\FFF) \cup \{x\})$, $P' =
  R_{\FFF'}(x) \setminus (\var_{\forall}(\FFF') \cup \{x\})$. The set of
  $P$\hy resolution paths in $\FFF$ starting from $x$ is identical to the set
  of $P'$\hy resolution paths in $\FFF'$ starting from $x$ unless $R_{\FFF}(x)
  \neq R_{\FFF'}(x)$. If $R_{\FFF}(x) \subseteq R_{\FFF'}(x)$, every $P$\hy
  resolution path in $\FFF$ is a $P'$\hy resolution path in $\FFF'$. It is an
  easy consequence that $\Dres$ is continuous.

  (b) Let $\FFF$ be a QCNF formula and $x,y \in \var(\FFF)$ so that
  $\delta_{\FFF}(y) = \delta_{\FFF}(x) + 1$ and $(x,y) \notin
  \Dres_{\FFF}$. If $x \in \var_{\forall}(\FFF)$ and $y \in
  \var_{\exists}(\FFF)$, the result follows from
  Theorem~\ref{thm:resdep1}. Suppose $x \in \var_{\exists}(\FFF)$ and $y \in
  \var_{\forall}(\FFF)$. Let $\FFF'$ be the result of transposing $\exists x$
  and $\forall y$ in the quantifier prefix of $\FFF$. Because of $(x,y) \notin
  \Dres_{\FFF}$, we must have $(y,x) \notin \Dres_{\FFF'}$, so we can again
  apply Theorem~\ref{thm:resdep1} and conclude that $\FFF$ and $\FFF'$ are
  equivalent. If $q_{\FFF}(x) = q_{\FFF}(y)$, equivalence is trivial.
\end{proof}
Using Lemma \ref{lem:subcum}, we can conclude that all dependency relations
appearing in Figure \ref{fig:lattice} are cumulative dependency schemes.
\section{Computing Resolution\hy Path Dependencies} \label{sec:tractability}
This section will be devoted to proving that $\Dres$ is tractable. More
specifically, we will show that the set of literals that are resolution
connected to a given literal in a QCNF formula $\FFF$ with respect to a set $X
\subseteq \var_{\exists}(\FFF)$ can be computed in linear time. This result in
turn establishes linear time\hy tractability of deciding whether a pair of
variables is contained in $\Dres_{\FFF}$.

We will reduce the problem of finding resolution paths to the task of finding
properly edge\hy colored walks in certain edge\hy colored graphs. A
\emph{graph} $G$ consists of a finite set $V(G)$ of \emph{vertices} and a set
$E(G)$ of \emph{edges}, where the edge between two vertices $u$ and $v$ is
denoted by $uv$ or equivalently $vu$. All graphs we consider are undirected
and simple (i.e., without self\hy loops or multi\hy edges). If $G$ is a graph
and $v \in V(G)$, elements of the set $N_G(v) = \SB w \in V(G) \SM vw \in E(G)
\SE$ are called \emph{neighbors of $v$ in $G$}. In a \emph{$c$\hy edge\hy
  colored graph} $G$, every edge $e \in E(G)$ is assigned a \emph{color}
$\chi_{G}(e) \in \{ 1, \dots, k \}$. Given a (not necessarily edge\hy colored)
graph $G$, a \emph{walk} from $s$ to $t$ in $G$ is a sequence of vertices $\pi
= v_1,v_2, \dots, v_n$, where $v_1 = s$, $v_n = t$, and $v_iv_{i+1} \in E(G)$
for $i=1,\dots,n-1$. If further $v_i \neq v_{i+2}$ for all $i \in \{1, \dots,
n~-~2 \}$, $\pi$ is said to be \emph{retracting\hy free}. A walk $\pi =
v_1,\dots,v_n$ in a $c$\hy edge\hy colored graph $G$ is \emph{properly edge\hy
  colored (PEC)} if $\chi_G(v_iv_{i+1}) \neq \chi_G(v_{i+1}v_{i+2})$ for all
$i \in \{ 1,\dots, n - 2 \}$. A walk $v_1,\dots,v_n$ satisfying $v_i \neq v_j$
for distinct $i,j \in \{1,\dots,n\}$ is a \emph{path}. A PEC walk which is a
path is called a \emph{PEC path}. The \emph{length} of a
walk $v_1,\dots,v_{n+1}$ is $n$. For $2$\hy edge\hy colored graphs, we use the
names \emph{red} and \emph{blue} to denote the colors $1$ and $2$,
respectively.

Note that there can be a PEC walk from a vertex $s$ to a vertex $t$ without
there being a PEC path from $s$ to $t$. For instance, consider a $2$\hy
edge\hy colored graph with vertex set $\{s,u,v,w,t\}$ and edge set
$\{su,ut,uv,uw,vw\}$, such that $uv$ and $uw$ are red and the remaining edges
are blue. The sequence $s,u,v,w,u,t$ is a PEC walk from $s$ to $t$, but there
is no PEC path from $s$ to $t$.
\subsubsection{Construction.}  Let $\FFF$ be a QCNF formula with matrix
$F$, and let $X \subseteq \var_{\exists}(\FFF)$. We construct two graphs
$G_{\FFF,X}$ and $G'_{\FFF,X}$:
\begin{itemize}
\item For the set of vertices of $G_{\FFF,X}$, we choose $F \cup
  \lit(\FFF)$. Its edge set consists of all edges $\neg zz$ for $z \in X$, and
  all edges $C\ell$ where $\ell \in C$.

\item We define $G'_{\FFF,X}$ to be a $2$\hy edge\hy colored graph with vertex
  set $\lit(\FFF)$ and edge set $E_r \cup E_b$, where the set $E_r$ consists
  of all edges $\neg zz$ for $z \in X$, and $E_b$ consists of all edges
  $\ell\ell'$ such that there is a clause $C \in F$ with $\ell, \ell' \in
  C$. The edges in $E_r$ are red, while those in $E_b$ are
  blue. 
\end{itemize}
For general QCNF formulas $\FFF$, the size of $G'_{\FFF,X}$ can be quadratic
in the size of $\FFF$, since every clause of size $n$ gives rise to a clique
with $n$ vertices. This can be avoided by using the following trick: we first
convert $\FFF$ to a Q3CNF formula $\FFF'$ and then carry out the
construction. For any set $X' \subseteq \var(\FFF')$, we can clearly compute
$G'_{\FFF',X'}$ in time $\mathcal{O}(\Card{\FFF'})$. Furthermore, it is well
known that SAT can be reduced to 3SAT in linear time
\cite{KleineBuningLettman99}. We show that this reduction preserves resolution
connectedness.
\begin{restatable}{lemma}{LemThreeCNF}\label{lem:3cnf}\shortversion{\textup{($\star$)}}
  Let $\FFF$ be an arbitrary QCNF formula and $X \subseteq
  \var_{\exists}(\FFF)$. In time $\mathcal{O}(\Card{\FFF})$, one can construct
  a Q3CNF formula $\FFF'$ and a set $X' \subseteq \var_{\exists}(\FFF')$
  satisfying the following property: two literals $\ell, \ell' \in \lit(\FFF)$
  are resolution connected in $\FFF$ with respect to $X$ if and only if $\ell$
  and $\ell'$ are r\hy connected in $\FFF'$ with respect to $X'$.
\end{restatable}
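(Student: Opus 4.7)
The plan is to apply the standard linear-time CNF-to-3CNF reduction and then verify that it preserves the resolution connectedness relation. Concretely, for every clause $C = \ell_1 \vee \cdots \vee \ell_k$ in the matrix of $\FFF$ with $k \geq 4$, introduce $k-3$ fresh existential variables $y_1^C, \ldots, y_{k-3}^C$ (placed at any existential position in the prefix, e.g.\ innermost), and replace $C$ by the chain of clauses
\[
D_1^C = (\ell_1 \vee \ell_2 \vee y_1^C),\ D_2^C = (\neg y_1^C \vee \ell_3 \vee y_2^C),\ \ldots,\ D_{k-2}^C = (\neg y_{k-3}^C \vee \ell_{k-1} \vee \ell_k).
\]
Clauses of size at most three are left untouched. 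Let $\FFF'$ be the resulting Q3CNF formula and set $X' = X \cup \SB y_i^C \SM C \text{ is rewritten},\ 1 \leq i \leq k-3 \SE \subseteq \var_{\exists}(\FFF')$. The construction clearly runs in time $\bigoh(\Card{\FFF})$.

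For the forward direction, the plan is to translate every $X$-resolution path in $\FFF$ into an $X'$-resolution path in $\FFF'$ by expanding each step that uses a rewritten clause. If a step enters $C$ at $\ell_a$ (which occurs in some chain clause $D_r^C$) and leaves at $\ell_b$ (which occurs in $D_s^C$), splice in the sub-path
\[
\ell_a,\ D_r^C,\ y_r^C,\ \neg y_r^C,\ D_{r+1}^C,\ y_{r+1}^C,\ \ldots,\ \neg y_{s-1}^C,\ D_s^C,\ \ell_b
\]
(or simply $\ell_a, D_r^C, \ell_b$ when $r = s$, with the symmetric traversal when $s < r$). Since each $y_i^C$ lies in $X'$, the connection condition (property~3 of Definition~\ref{def:rconnected}) is satisfied; the non-degeneracy conditions (properties 2 and 4) hold because the auxiliary variables are fresh and distinct from the originals.

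For the backward direction, start from an $X'$-resolution path $\pi'$ in $\FFF'$ whose endpoints $\ell, \ell'$ lie in $\lit(\FFF)$. The key observation is that each auxiliary variable $y_i^C$ occurs only in the chain clauses replacing $C$, and those clauses contain no literals other than original literals of $C$ and auxiliary literals of $C$. Consequently, every maximal sub-run of $\pi'$ that passes through chain clauses is confined to a single chain and must enter and exit through original literals of $C$, say $\ell_a$ and $\ell_b$. Replacing each such run by the single step $\ldots, \ell_a, C, \ell_b, \ldots$ and retaining steps through untouched small clauses produces an $X$-resolution path in $\FFF$ from $\ell$ to $\ell'$.

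The main technical obstacle will be verifying the boundary conditions of the collapsed path: within each collapsed step, $\var(\ell_a) \neq \var(\ell_b)$ since $C$ is non-tautological, and the connection condition across the boundaries between collapsed and retained steps is inherited from $\pi'$ because the bounding literals are original literals whose variables lie in $X' \cap \var(\FFF) = X$. Beyond this routine bookkeeping, no further difficulties arise.
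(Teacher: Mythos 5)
Your proof is correct and takes essentially the same route as the paper's: the paper splits an oversized clause $C$ into $C'=(\ell_1\vee\ell_2\vee z)$ and $C''=(\neg z\vee\ell_3\vee\dots\vee\ell_n)$ with a fresh existential $z$ added to $X$ and iterates, which yields exactly your chain construction, and both arguments translate resolution paths by splicing through the fresh connection literals in one direction and collapsing maximal runs through the chain (using non-tautology of $C$ to verify condition~4) in the other. The only difference is presentational (one chain at once versus iterated binary splits), so no further comparison is needed.
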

\longversion{\begin{proof}
  Let $\FFF = \mathsf{Q}_1x_1\dots\mathsf{Q}_rx_r F$, and suppose there is a
  clause $C \in F$ such that $C = (\ell_1 \vee \ell_2 \vee \dots \vee \ell_n)$
  and $n > 3$, where $\ell_i \in \lit(\FFF)$ for all $i \in \{1, \dots, n
  \}$. Let $z$ be a variable not contained in $\var(\FFF)$. We set $\FFF' =
  \exists z \mathsf{Q}_1x_1\dots\mathsf{Q}_rx_r F'$, where $F' = (F \setminus
  \{ C \}) \cup \{C',C''\}$, for $C' = (\ell_1 \vee \ell_2 \vee z)$ and $C'' =
  (\neg z \vee \ell_3 \vee \dots \vee \ell_n)$. We will show that two literals
  $\ell, \ell' \in \lit(\FFF)$ are resolution connected in $\FFF$ with respect
  to $X \subseteq \var_{\exists}(\FFF)$ if and only if $\ell$ and $\ell'$ are
  resolution connected in $\FFF'$ with respect to $X \cup \{z\}$. Let $\ell, \dots,
  \ell_{j_1}, C, \ell_{k_1}, \dots,$\hskip 0pt$\ell_{j_m}, C,
  \ell_{k_m},\dots, \ell'$ be an $X$\hy resolution path in $\FFF$. If $1 \leq
  j_1,k_1,\dots,j_m,k_m < 3$ or $3 \leq j_1,k_1,\dots,j_m,k_m \leq n$, we
  simply replace every occurrence of $C$ with $C'$ or $C''$, respectively, to
  obtain an $X \cup \{z\}$\hy resolution path in $\FFF'$. Without loss of
  generality, suppose $1 \leq j_1,\dots,j_m < 3$ and $3 \leq k_1,\dots,k_m
  \leq n$. It is easy to verify that $\ell,\dots,\ell_{j_1},C',z,\neg
  z,$\hskip 0pt$C'',\ell_{k_1}, \dots,\ell_{j_m},C',z,\neg z,$\hskip
  0pt$C'',\ell_{k_m},\dots, \ell'$ is an $X \cup \{z\}$\hy resolution path in
  $\FFF'$ (recall that $z$ does not occur in anywhere in $\FFF$). For the
  converse, we proceed in the opposite direction, substituting $\ell,C,\ell'$
  for subsequences $\ell,C',z,\neg z, C'',\ell'$ and $\ell,C'',\neg z, z,
  C',\ell'$ of an $X\cup\{z\}$\hy resolution path in $\FFF'$ (where
  $\ell,\ell' \in \lit(\FFF)$). Because $C$ is non\hy tautological, $C'$ and
  $C''$ only have the variable $z$ in common. Keeping this in mind, it is
  straightforward to check that the resulting sequence is an $X$\hy resolution
  path in $\FFF$.

  We obtain the desired Q3CNF formula from a QCNF formula $\FFF$ by scanning
  $\FFF$ from left to right, splitting clauses where necessary. This can be
  done in time $\mathcal{O}(\Card{\FFF})$.
\end{proof}
}
\begin{proposition}\label{prop:linearreduction}
  Given a Q3CNF formula $\FFF$ and a set $X \subseteq \var_{\exists}(\FFF)$,
  the graph $G'_{\FFF,X}$ can be constructed in time
  $\mathcal{O}(\Card{\FFF})$.
\end{proposition}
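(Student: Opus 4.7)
The plan is to bound $\Card{V(G'_{\FFF,X})} + \Card{E(G'_{\FFF,X})}$ by $\bigoh(\Card{\FFF})$ and then exhibit a single\hy pass construction that produces the graph in time proportional to its size. The vertex set $\lit(\FFF)$ has exactly $2\Card{\var(\FFF)} \leq 2\Card{\FFF}$ elements. The red edges $E_r$ number $\Card{X} \leq \Card{\var(\FFF)}$. This is where the Q3CNF hypothesis becomes essential: each clause has at most three literals and therefore contributes at most $\binom{3}{2} = 3$ blue edges to $E_b$, so $\Card{E_b} \leq 3\Card{F} = \bigoh(\Card{\FFF})$. Without the ternary assumption a clause of size $n$ would contribute $\Theta(n^2)$ blue edges, which is exactly the blow\hy up that Lemma~\ref{lem:3cnf} was engineered to avoid.

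For the construction itself, I would first assign each variable $x \in \var(\FFF)$ a unique integer in $\{1, \dots, \Card{\var(\FFF)}\}$ during one scan of the prefix, so that each literal can be addressed by an integer in $\{1, \dots, 2\Card{\var(\FFF)}\}$ in $\bigoh(1)$ time. Then I would iterate once through the clauses of the matrix, emitting the (at most three) blue edges of each clause, and once through $X$, emitting one red edge $\neg z z$ per element. This produces all edges of $G'_{\FFF,X}$, possibly with duplicates among the blue edges, in $\bigoh(\Card{\FFF})$ time.

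Because $G'_{\FFF,X}$ is defined to be simple, I would finish by de\hy duplicating the edge list. Representing each emitted edge as a triple (smaller endpoint, larger endpoint, color) whose integer keys are bounded by $\bigoh(\Card{\FFF})$, radix sort runs in $\bigoh(\Card{\FFF})$ time, after which a single linear scan through the sorted list collapses duplicates and writes out the adjacency lists of $G'_{\FFF,X}$ together with the coloring $\chi$. The only mildly subtle point is this last step, since comparison\hy based sorting would introduce a logarithmic factor; with radix sort (or, equivalently, by maintaining per\hy vertex scratch buffers indexed by the neighbor identifier), duplicate removal stays linear. Everything else is a direct size bound and bookkeeping, so I expect no real obstacle.
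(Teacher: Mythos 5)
Your argument is correct and is exactly the reasoning the paper leaves implicit: it states this proposition without proof, having already remarked in the surrounding text that the ternary restriction is what keeps the number of blue edges (at most $\binom{3}{2}=3$ per clause) linear in $\Card{\FFF}$. Your additional care about de\hy duplicating parallel blue edges via radix sort to keep the graph simple is a legitimate implementation detail the paper glosses over, but it introduces no new idea and no gap.
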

\begin{figure}
\label{fig:clauseliteralgraph}
\shortversion{\vskip -15pt}
\begin{center}
\begin{tikzpicture}[auto, thick, scale=0.62]
\node [clause, label=below:$C_1$] (c1) at (0,0) {};
\node [clause, label=below:$C_2$] (c2) at (-1.5,4) {};

\node [clause, label=below:$C_3$] (c3) at (-2,6) {};

\node [clause, label=below:$C_4$] (c4) at (2,6) {};

\node [literal, label=above:$y_3$] (x) at (0.5,7) {};
\node [literal, label=above:$\neg y_3$] (nx) at (-0.5,7) {};

\node [literal, label=right:$y_2$] (u) at (0,1.5) {};
\node [literal, label=right:$\neg y_2$] (nu) at (0,2.5) {};

\node [literal, label=right:$y_1$] (y) at (2,1.5) {};
\node [literal, label=right:$\neg y_1$] (ny) at (2,2.5) {};

\node [clause, label=left:$C_5$] (c5) at (2,0) {};
\node [literal, label=right:$y_4$] (y4) at (3,0) {};

\node [literal, label=left:$x_1$] (v) at (-3,2) {};
\node [literal, label=left:$\neg x_1$] (nv) at (-3,3) {};

\node [literal, label=below:$x_2$] (z) at (-1.5,0) {};
\node [literal, label=below:$\neg x_2$] (nz) at (-2.5,0) {};

\path[thin] (y) edge (ny);
\path[thin] (x) edge (nx);

\path[thin] (c1) edge (z);
\path[thin] (c1) edge (v);
\path[thin] (c1) edge (u);
\path[thin] (c1) edge (y);

\path[thin] (c2) edge (ny);
\path[thin] (c2) edge (nv);
\path[thin] (c2) edge (nu);

\path[thin] (c3) edge (nx);
\path[thin] (c3) edge (ny);

\path[thin] (c4) edge (x);
\path[thin,bend left=40] (c4) edge (ny);

\path[thin] (c5) edge (y);
\path[thin] (c5) edge (y4);

\node [literal, label=above:$y_3$] (xc) at (10.5,7) {};
\node [literal, label=above:$\neg y_3$] (nxc) at (9.5,7) {};

\node [literal, label=right:$y_2$] (uc) at (10,1.5) {};
\node [literal, label=right:$\neg y_2$] (nuc) at (10,2.5) {};

\node [literal, label=right:$y_1$] (yc) at (12,1.5) {};
\node [literal, label=right:$\neg y_1$] (nyc) at (12,2.5) {};
\path[thick] (yc) edge (nyc);

\node [literal, label=left:$x_1$] (vc) at (7,2) {};
\node [literal, label=left:$\neg x_1$] (nvc) at (7,3) {};

\node [literal, label=below:$x_2$] (zc) at (8.5,0) {};
\node [literal, label=below:$\neg x_2$] (nzc) at (7.5,0) {};

\path[thick] (xc) edge (nxc);
\path[thin,dashed,bend left=20] (xc) edge (nyc);

\path[thin,dashed,bend right=20] (nxc) edge (nyc);

\path[thin,dashed,bend left=20] (nvc) edge (nyc);
\path[thin,dashed,bend right=20] (nvc) edge (nuc);

\path[thin,dashed,bend right=20] (nuc) edge (nyc);

\path[thin,dashed,bend right=40] (zc) edge (yc);
\path[thin,dashed,bend left=20] (zc) edge (vc);
\path[thin,dashed] (zc) edge (uc);

\path[thin,dashed,bend right=20] (vc) edge (uc);
\path[thin,dashed,bend left=10] (vc) edge (yc);

\path[thin,dashed,bend right=20] (uc) edge (yc);

\end{tikzpicture}
\end{center}
\caption{The graphs $G_{\FFF,X}$ (left) and $G'_{\FFF,X}$ (right) for the
  formula $\FFF$ of Example \ref{ex:running} and $X~=~\{y_1,y_3\}$. Red edges
  of $G'_{\FFF,X}$ are represented by solid lines, and blue edges by dashed
  lines.}
\end{figure}
\begin{lemma} \label{lem:reduction} Let $\FFF$ be a QCNF formula, $X \subseteq
  \var_{\exists}(\FFF)$, and $\ell, \ell' \in \lit(\FFF)$ such that $\ell \neq
  \ell'$. The following statements are equivalent:
\begin{enumerate}
\item $\ell$ and $\ell'$ are resolution connected in $\FFF$ with respect to $X$.
\item There is a retracting\hy free walk $\ell_1, C_1,
  \ell'_1,\ell_2,C_2,\ell'_2,\dots, \ell_n,C_n, \ell'_n$ in $G_{\FFF,X}$ from
  $\ell$ to $\ell'$, where $C_i \in F$ and $\ell_i,\ell'_i \in \lit(\FFF)$ for
  $i \in \{1,\dots,n\}$.
\item There is a PEC walk in $G'_{\FFF,X}$ from $\ell$ to
  $\ell'$ whose first and last edges are blue.
\end{enumerate}
\end{lemma}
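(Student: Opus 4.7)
The plan is to establish the three equivalences by direct translation between the combinatorial objects, exploiting the fact that all three are essentially the same walk written in different notations.

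For (1) $\Leftrightarrow$ (2), I would argue that a sequence $\ell_1, C_1, \ell'_1, \ell_2, C_2, \ell'_2, \ldots, \ell_n, C_n, \ell'_n$ satisfies conditions 1--3 of Definition~\ref{def:rconnected} if and only if it is a walk in $G_{\FFF,X}$: condition~2 ($\ell_i, \ell'_i \in C_i$) corresponds precisely to the edges $C_i\ell_i$ and $C_i\ell'_i$ being present in $G_{\FFF,X}$, and condition~3 ($\ell_{i+1} = \overline{\ell'_i}$ with both literals in $X \cup \overline{X}$) corresponds precisely to $\ell'_i\ell_{i+1}$ being one of the added edges $\neg z z$ for $z \in X$. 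It remains to match condition~4 with retracting-freeness. Because every clause in the matrix is non-tautological, $\ell_i, \ell'_i \in C_i$ already forces $\var(\ell_i) \neq \var(\ell'_i)$ as soon as $\ell_i \neq \ell'_i$; together with $\ell_1 \neq \ell'_n$ (which is given since the walk is from $\ell$ to $\ell' \neq \ell$), this is exactly what retracting-freeness contributes at the literal-vertex positions, while all other $v_i \neq v_{i+2}$ constraints hold automatically because consecutive positions alternate between literals and clauses in a compatible way.

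For (2) $\Leftrightarrow$ (3), I would define an explicit translation. Given a walk $\ell_1, C_1, \ell'_1, \ldots, \ell_n, C_n, \ell'_n$ in $G_{\FFF,X}$, the sequence $\ell_1, \ell'_1, \ell_2, \ell'_2, \ldots, \ell_n, \ell'_n$ in $G'_{\FFF,X}$ is a walk: each pair $\ell_i\ell'_i$ is a blue edge (witnessed by $C_i$) and each pair $\ell'_i\ell_{i+1}$ is a red edge (since $\ell_{i+1} = \overline{\ell'_i}$ with $\var(\ell'_i) \in X$). Since the colors strictly alternate blue-red-blue-$\cdots$-red-blue, the walk is PEC and begins and ends with a blue edge. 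For the converse, start from a PEC walk in $G'_{\FFF,X}$ beginning and ending with a blue edge: the alternation forces it to have the form $\ell_1, \ell'_1, \ell_2, \ell'_2, \ldots, \ell_n, \ell'_n$ with $\ell_i\ell'_i$ blue and $\ell'_i\ell_{i+1}$ red. For each blue edge $\ell_i\ell'_i$, pick a witness clause $C_i \in F$ containing both literals, which exists by the definition of $E_b$; inserting these clauses produces a walk in $G_{\FFF,X}$. Retracting-freeness at the literal positions follows because $G'_{\FFF,X}$ has no self-loops, forcing $\ell_i \neq \ell'_i$; retracting-freeness at clause positions is automatic by type.

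I do not expect any serious obstacle: the proof is essentially bookkeeping and both directions of each equivalence are given by a transparent constructive map. The only subtlety worth stating carefully is the interplay between the non-tautology assumption on the matrix, the simplicity of the graphs (no self-loops), and condition~4 of Definition~\ref{def:rconnected} --- together these ensure that the mild looking retracting-free condition exactly captures the stronger looking $\var(\ell_i) \neq \var(\ell'_i)$ requirement without further work.
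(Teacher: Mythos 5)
Your proof is correct and follows essentially the same route as the paper's: a direct syntactic translation between resolution paths, walks in $G_{\FFF,X}$, and PEC walks in $G'_{\FFF,X}$, using the non-tautology assumption to identify $\ell_i \neq \ell'_i$ with $\var(\ell_i) \neq \var(\ell'_i)$ and the hypothesis $\ell \neq \ell'$ to recover the condition $\ell_1 \neq \ell'_n$. The only difference is organizational (two biconditionals instead of the paper's cycle $1 \Rightarrow 2 \Rightarrow 3 \Rightarrow 1$), which changes nothing of substance.
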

\begin{proof}
  (1 $\Rightarrow$ 2) Suppose $\ell$ and $\ell'$ are resolution connected in $\FFF$
  with respect to $X$. Then there exists an $X$\hy resolution path $\pi =
  \ell_1,C_1,\ell'_1,\ell_2,C_2,\dots,$\hskip 0pt$\ell_n,C_n,\ell'_n$ in
  $\FFF$ from $\ell$ to $\ell'$. We claim that $\pi$ is already a
  retracting\hy free walk in $G_{\FFF,X}$ of the desired form. Because $\pi$
  is a resolution path, we have $\ell_{i+1} = \overline{\ell_i'}$ and
  therefore $\ell'_i\ell_{i+1}$ in $E(G_{\FFF, X})$ for all $i \in \{ 1,
  \dots, n-1 \}$. Moreover, because $\ell_i,\ell'_i \in C_i$ for $i \in \{
  1, \dots, n \}$, we have $\ell_iC_i, \ell'_iC_i \in E(G_{\FFF, X})$ as
  well. So $\pi$ is indeed a walk in $G_{\FFF,X}$. Since $\var(\ell_i) \neq
  \var(\ell'_i)$ for $i \in \{ 1, \dots, n \}$, $\pi$ must be retracting\hy
  free.

  (2 $\Rightarrow$ 3) Let $\pi = \ell_1, C_1, \ell'_1, \dots, \ell_n,C_n,
  \ell'_n$ be a retracting\hy free walk from $\ell$ to $\ell'$ in $G_{\FFF,X}$
  so that $C_i \in F$ and $\ell_i, \ell'_i \in \lit(\FFF)$ for $i \in \{ 1,
  \dots, n \}$. We show that the sequence $\pi' = \ell_1, \ell'_1, \dots,
  \ell_n,\ell'_n$ is a PEC walk from $\ell$ to $\ell'$ in
  $G'_{\FFF,X}$ whose first and last edges are blue. Let $\ell_iC_i$,
  $C_i\ell'_i$ be a pair of consecutive edges in $\pi$ where $i \in \{ 1,
  \dots, n \}$. By construction of $G_{\FFF,X}$, we have $\ell_i,\ell'_i \in
  C_i$. Because $\pi$ is retracting\hy free, $\ell_i \neq \ell'_i$, and thus
  there is a blue edge $\ell_i\ell'_i$ in $G'_{\FFF,X}$. For all $i \in \{1,
  \dots, n-1\}$, the edge $\ell'_i\ell_{i+1}$ of $\pi$ is a red edge in
  $G'_{\FFF,X}$. So $\pi'$ is a walk in $G'_{\FFF,X}$. Moreover, the first and
  last edges of $\pi$ are blue, and it is easily to verified that $\pi'$ is
  PEC.

  (3 $\Rightarrow$ 1) Now let $\pi = \ell_1, \ell'_1, \ell_2,\ell'_2, \dots,
  \ell_n, \ell'_n$ be a PEC walk from $\ell$ to $\ell'$ in $G'_{\FFF,X}$ whose
  first and last edges are blue. By construction of $G'_{\FFF,X}$, for every
  blue edge $\ell_i \ell'_i$ traversed by $\pi$, there is a clause $C_i$ in
  $F$ such that $\ell_i,\ell'_i \in C_i$, for $i \in \{ 1, \dots, n \}$. For
  every red edge $\ell'_i\ell_{i+1}$, where $i \in \{ 1, \dots, n-1 \}$, we
  have $\ell_{i+1} = \overline{\ell'_i}$ and $\ell_i',\ell_{i+1} \in X \cup
  \overline{X}$. Let $\pi'$ be the sequence $\ell_1, C_1,
  \ell'_1,\dots,$\hskip 0pt$\ell_n,C_n,\ell'_n$. $\pi'$ is an $X$\hy
  resolution path in $\FFF$: we already know that $\pi'$ satisfies conditions
  1-3 of Definition~\ref{def:rconnected}. To verify condition 4, we must show
  that $\var(\ell_i) \neq \var(\ell'_i)$ for all $i \in \{ 1, \dots, n
  \}$. Suppose to the contrary that $\var(\ell_i) = \var(\ell'_i)$ for some $i
  \in \{1,\dots,n\}$. Because $G'_{\FFF,X}$ does not contain self\hy loops,
  this implies $\ell'_i = \overline{\ell_i}$. But then $\ell_i,
  \overline{\ell_i} \in C_i$, contrary to the assumption that $F$ does not
  contain tautological clauses. This concludes the proof that $\pi'$ is an
  $X$\hy resolution path in $\FFF$. It follows that $\ell$ and $\ell'$ are
  resolution connected in $\FFF$ with respect to $X$.
\end{proof}

\noindent \textbf{Algorithm PEC-Walk.} We now describe the algorithm
\emph{PEC-Walk} that takes as input a 2\hy edge\hy colored graph~$G$ and a
vertex $s \in V(G)$, and computes the set of vertices $t$ such that there is a
PEC walk from $s$ to $t$ whose first and last edges are blue. We maintain a
set $Q$ containing (ordered) pairs of vertices $(v,w)$ joined by edges that
can be traversed by a PEC walk starting from $s$. Initially, $Q$ is empty. For
each vertex $v$, we store a set $\psi(v) \subseteq \{\mathit{red},
\mathit{blue} \}$, where $c \in \psi(v)$ indicates that there is a PEC walk
from $s$ to $v$ ending in an edge with color $c$. In an initialization phase,
we first set $\psi(u) = \emptyset$ for all vertices $u$. We then add all pairs
$(s,v)$ to $Q$ such that $v$ is a neighbor of $s$ and $sv$ is a blue edge,
inserting $\mathit{blue}$ into $\psi(v)$ at the same time. In the main
procedure, we repeat the following steps until $Q$ is empty: we remove a pair
$(v,w)$ from $Q$ and add all pairs $(w,u)$ to $Q$ such that $u$ is a neighbor
of $w$, $wu$ is an edge with color $c$ different from the color of $vw$, and
$c$ is not already in $\psi(w)$. For every pair $(v,w)$ we put into $Q$, we
add its color to $\psi(w)$. \longversion{Pseudocode for the algorithm is shown
  below. \begin{algorithm}
\renewcommand{\thealgorithm}{}
\begin{algorithmic}[1]
\ForAll{$v \in V(G)$} 
\State $\psi(v) \gets \emptyset$
\EndFor
\Statex
\ForAll{$w \in N_G(s)$ with $\chi_G(sw) = \mathit{blue}$} 
\State add $(s,w)$ to $Q$
\State $\psi(w) \gets \{\mathit{blue}\}$
\EndFor
\Statex
\While{$Q \neq \emptyset$}
\State let $(u,v)$ be an element of $Q$
\State remove $(u,v)$ from $Q$
\ForAll{$w \in N_G(v)$ such that $\chi_G(vw) \neq \chi_G(uv)$}
\If {$\chi_G(vw) \notin \psi(w)$}
\State add $\chi_G(vw)$ to $\psi(w)$
\State add $(v,w)$ to $Q$
\EndIf
\EndFor
\EndWhile
\end{algorithmic}
\caption{PEC-Walk(Graph $G$, $s \in V(G)$)}
\end{algorithm}
}
\begin{lemma}\label{lem:linearAlgorithm}
  Let $G$ be a 2\hy edge\hy colored graph and $s \in V(G)$. On input
  $(G,s)$, \mbox{PEC-Walk} runs in time $\mathcal{O}(\Card{E(G)}+\Card{V(G)})$.
\end{lemma}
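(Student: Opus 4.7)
The plan is to bound the two phases of PEC-Walk separately. The initialization loop that sets $\psi(v) \gets \emptyset$ for every vertex runs in $O(|V(G)|)$ time, and scanning the blue neighbors of $s$ to seed $Q$ costs $O(|N_G(s)|) = O(|E(G)|)$. Throughout, I assume $\psi(v)$ is stored as a length-$2$ bitmask so that insertion and membership tests are $O(1)$, and $Q$ is stored as a simple queue or linked list supporting $O(1)$ insertion and removal.

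The key combinatorial invariant I would isolate is the following: for every vertex $w$ and every color $c \in \{\mathit{red}, \mathit{blue}\}$, at most one pair $(v,w)$ with $\chi_G(vw) = c$ is ever placed on $Q$. This holds because a pair $(v,w)$ is inserted only inside the guard $\chi_G(vw) \notin \psi(w)$, and the next line puts $\chi_G(vw)$ into $\psi(w)$; once a color appears in $\psi(w)$ it remains there forever, so any later candidate pair ending at $w$ with the same edge color is rejected by the guard. The seeding phase is the same base case: only blue edges $(s,w)$ are enqueued, and $\psi(w)$ receives $\mathit{blue}$ simultaneously.

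From this invariant it follows that at most $2|V(G)|$ pairs are ever enqueued. Popping a pair $(u,v)$ triggers a scan of all neighbors of $v$, contributing $O(|N_G(v)|)$ work to the main loop. Since each vertex $v$ appears as the second coordinate of at most two popped pairs, the aggregate cost of the main loop is bounded by
\[
\sum_{(u,v)\ \text{popped}} O(|N_G(v)|) \;\leq\; 2 \sum_{v \in V(G)} O(|N_G(v)|) \;=\; O(|E(G)|).
\]
Combining with the initialization cost gives the claimed $O(|V(G)| + |E(G)|)$ bound.

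The only delicate point I anticipate is pinning down the invariant precisely and matching it against the exact form of the guard in the pseudocode; once that is done, the running-time analysis reduces to a standard amortization over vertex degrees, so I do not expect any substantive obstacle beyond bookkeeping.
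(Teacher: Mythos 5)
Your proof is correct and follows essentially the same amortization as the paper's (much terser) argument: the guard on $\psi$ together with the monotone growth of the $\psi$-sets bounds the number of enqueued pairs, and the neighbor scans are then charged to vertex degrees. Your explicit per-vertex-per-color invariant is a slightly sharper bookkeeping of the paper's observation that each ordered pair is enqueued at most once and examined at most twice, but it is not a different approach.
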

\begin{proof}
  Every ordered pair of vertices joined by an edge is examined at most twice
  and added to $Q$ at most once. The algorithm terminates when $Q$ is empty,
  and an element is removed from $Q$ in each iteration. Initialization can
  take at most $\mathcal{O}(\Card{E(G)}+\Card{V(G)})$ steps. So the time
  required by the entire algorithm is $\mathcal{O}(\Card{E(G)}+\Card{V(G)})$.
\end{proof}
\begin{lemma} \label{lem:algorithm} Let $G$ be a 2\hy edge\hy colored graph,
  $s,t \in V(G), s \neq t$, and let $\psi$ be a vertex labeling generated by
  running {PEC-Walk} on input $(G,s)$. There is a PEC walk
  from $s$ to $t$ whose first edge is blue and whose last edge has color $c
  \in \{\mathit{red},\mathit{blue}\}$ if and only if $c \in \psi(t)$.
\end{lemma}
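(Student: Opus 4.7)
The plan is to prove both implications by induction, using the algorithm's own dynamics as the inductive scaffolding. The key auxiliary observation is that colors are ever only inserted into $\psi(w)$ at the moment a pair $(v,w)$ with $\chi_G(vw) = c$ is placed into $Q$; equivalently, the contents of $\psi$ and of the history of $Q$ evolve in lockstep.

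For the forward implication, I would induct on the length $n$ of a witnessing PEC walk $\pi = v_0, v_1, \ldots, v_n$ with $v_0 = s$, $v_n = t$, first edge blue, and last edge of color $c$. The base case $n=1$ is exactly handled by the initialization loop, which inserts $(s,t)$ into $Q$ and puts \emph{blue} into $\psi(t)$. For $n \geq 2$, the prefix $v_0, \ldots, v_{n-1}$ is a shorter PEC walk whose last edge has some color $c' = \chi_G(v_{n-2}v_{n-1})$, so by the induction hypothesis $c' \in \psi(v_{n-1})$. Since colors enter $\psi(v_{n-1})$ only through insertions, some pair $(u, v_{n-1})$ with $\chi_G(uv_{n-1}) = c'$ was added to $Q$ and eventually removed in the while loop. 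At that removal, the algorithm inspects every neighbor of $v_{n-1}$ joined by an edge of color different from $c'$. Because $\pi$ is PEC we have $c = \chi_G(v_{n-1}v_n) \neq c'$, so $v_n = t$ is examined and the color $c$ is either already in $\psi(t)$ or is added there at that moment.

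For the reverse implication, I would establish and preserve the loop invariant: whenever a pair $(v,w)$ is placed into $Q$, there exists a PEC walk from $s$ to $w$ whose first edge is blue and whose final edge is $vw$ (and thus has color $\chi_G(vw)$). The invariant holds after initialization, since each pair $(s,w)$ inserted there corresponds to the length-one walk $s,w$ with blue edge $sw$. For preservation, suppose $(v,w)$ already satisfies the invariant with witness walk $\pi$, and the algorithm inserts $(w,u)$ because $\chi_G(wu) \neq \chi_G(vw)$. Appending $u$ to $\pi$ yields a walk from $s$ to $u$ whose first edge is still blue and whose last two edge colors differ, hence a PEC walk witnessing $(w,u)$. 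Since $c \in \psi(t)$ implies that some pair $(v,t)$ with $\chi_G(vt) = c$ was inserted into $Q$, the invariant provides exactly the required walk.

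There is no deep obstacle; the main care needed is the boundary case $n = 1$, where the walk's unique edge is simultaneously its first and last, forcing $c = \mathit{blue}$ — which is precisely what the initialization phase records. All remaining arguments are routine consequences of the fact that \emph{PEC-Walk} is essentially a breadth-first search over the auxiliary graph whose vertices are (vertex, incoming-color) pairs and whose edges respect the color-alternation constraint.
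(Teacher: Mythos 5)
Your proof is correct and takes essentially the same route as the paper's: one direction by induction on the length of a witnessing PEC walk (the paper inducts on the \emph{minimal} such length and adds a minimality argument for the prefix, whereas you apply the induction hypothesis to the prefix of an arbitrary witness, which is marginally cleaner), and the other by induction on the algorithm's execution, which you package as a loop invariant on the pairs entering $Q$. No gaps.
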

\begin{proof} 
  By the preceding lemma, the algorithm always terminates and produces a
  labeling $\psi$.

  ($\Leftarrow$) Let $t$ be a vertex of $G$ different from $s$. We show that
  if $c \in \psi(t)$, there is a PEC walk from $s$ to $t$ whose first edge is
  blue and whose final edge has color $c$. We proceed by induction on the
  number $n$ of times the algorithm enters the main loop with $c \notin
  \psi(t)$. If $n = 0$, color $c$ is added to $\psi(t)$ during the
  initialization phase, so there must be a blue edge $st$. Assume the
  statement holds for all $0 \leq k \leq n$, and $c$ is added to $\psi(t)$ in
  iteration $n+1$. Then there must be a pair $(v,t)$ with $\chi_G(vt) = c$
  which is added to $Q$ in this iteration. That is the case only if a pair
  $(u,v)$ is removed from $Q$ during the same iteration with $\chi_G(uv) =
  c'$, where $c' \neq c$. The pair $(u,v)$ must have been inserted into $Q$
  before iteration $n+1$, at which point $c'$ was added to $\psi(v)$. Applying
  the induction hypothesis, we can conclude there must be a PEC walk from $s$
  to $v$ such that its first edge is blue and its last edge has color $c'$. By
  appending $vt$ to this walk, we obtain a PEC walk from $s$ to $t$ with the
  desired properties.

  ($\Rightarrow$) Suppose there is a PEC walk from $s$ to $t$ whose first edge
  is blue and whose last edge has color $c$. Let $n$ be the smallest integer
  that is the length of such a walk. We will show by induction on $n$ that $c
  \in \psi(t)$. The case $n = 1$ is taken care of by the initialization phase
  of the algorithm. Suppose the statement holds for all $n \in \{1, \dots,
  m\}$. Let $v_0,\dots, v_{m+1}$ be a PEC walk from $s$ to $t$ with the
  property that its first edge is blue and its last edge has color $c$, and
  assume there is no shorter PEC walk with this property. Then
  $v_0,\dots,v_{m}$ is a PEC walk from $s$ to $v_{m}$ so that $v_0v_1$ is
  blue, and $\chi_G(v_{m-1}v_m) = c'$ where $c \neq c'$. There can be no $k <
  m$ such that there is a PEC walk of length $k$ from $s$ to $v_{m}$ whose
  first edge is blue and whose last edge has color $c'$: otherwise, one could
  append $v_{m}v_{m+1}$ to this path to obtain a PEC walk from $s$ to
  $v_{m+1}$ whose initial edge is blue and whose final edge has color $c$ of
  length $k + 1 < m + 1$, a contradiction. We can therefore apply the
  induction hypothesis and conclude that $c' \in \psi(v_{m})$. Let $(w,v_{m})$
  be the pair that was removed from $Q$ in the iteration of the main loop in
  which $c'$ was added to $\psi(v_{m})$. Because $c' \neq c$, in the same
  iteration the pair $(v_{m},v_{m+1})$ must have been added to $Q$ and $c$ put
  into to $\psi(v_{m+1})$, unless already $c \in \psi(v_{m+1})$.
\end{proof}
The next result is immediate from Lemmas \ref{lem:linearAlgorithm} and
\ref{lem:algorithm}.
\begin{proposition}\label{prop:reachable}
  Given a $2$\hy edge\hy colored graph $G$, a vertex $s \in V(G)$, and some $c
  \in \{ \mathit{red}, \mathit{blue} \}$, the set of vertices reachable from
  $s$ along some PEC walk in $G$ whose first edge is blue and whose last edge
  has color $c$ can be computed in time
  $\mathcal{O}(\Card{E(G)}+\Card{V(G)})$.
\end{proposition}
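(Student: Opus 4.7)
The plan is to observe that Proposition \ref{prop:reachable} is essentially a packaging of the two preceding lemmas, so the proof reduces to running the algorithm PEC-Walk and reading off its output. First I would run PEC-Walk on input $(G,s)$. By Lemma \ref{lem:linearAlgorithm}, this costs $\mathcal{O}(\Card{E(G)}+\Card{V(G)})$ time and produces a labeling $\psi$ on $V(G)$.

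Next I would extract the target set by scanning the vertices of $G$ and collecting exactly those $t \in V(G) \setminus \{s\}$ with $c \in \psi(t)$ (together with $s$ itself if it is reachable from itself in the required sense, which is only the degenerate zero-length case that we exclude by the constraint $s \neq t$ in Lemma \ref{lem:algorithm}). By Lemma \ref{lem:algorithm}, this set is precisely the set of vertices $t$ for which there exists a PEC walk from $s$ to $t$ beginning with a blue edge and ending with an edge of color $c$. The scan itself costs $\mathcal{O}(\Card{V(G)})$, which is absorbed into the overall bound.

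There is no real obstacle here, because both directions of correctness and the running-time bound have already been proven in Lemmas \ref{lem:linearAlgorithm} and \ref{lem:algorithm}. The only thing to state explicitly is that the algorithm depends on $s$ but not on $c$, so a single invocation yields enough information to answer the question for either choice of $c$. Hence the total time to produce the desired set is $\mathcal{O}(\Card{E(G)}+\Card{V(G)})$, as claimed.
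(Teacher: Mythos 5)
Your proof is correct and matches the paper, which simply states that the proposition is immediate from Lemmas \ref{lem:linearAlgorithm} and \ref{lem:algorithm}; you have just spelled out the obvious details (run PEC-Walk once, then scan the labels $\psi(t)$ for the color $c$).
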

With all the pieces in place, it is now straightforward to prove our main result.
\begin{theorem} \label{thm:tractable} Given a QCNF formula $\FFF$ and a pair
  of variables $x, y \in \var(\FFF)$, one can decide whether $(x,y) \in
  \Dres_{\FFF}$ in time $\mathcal{O}(\Card{\FFF})$. Hence the resolution\hy
  path dependency scheme is tractable.
\end{theorem}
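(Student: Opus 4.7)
The plan is to assemble the machinery developed in the previous lemmas into a linear-time decision procedure. Given the input pair $(x,y)$, I first perform the cheap prefix checks: verify that $(x,y) \in R_{\FFF}$ and $q_{\FFF}(x) \neq q_{\FFF}(y)$, which take $\mathcal{O}(\Card{\FFF})$ time after a single scan of the prefix. If either fails, output ``no''. Otherwise, compute the set $X = R_{\FFF}(x) \setminus (\var_{\forall}(\FFF) \cup \{x,y\})$, which by definition is the set with respect to which the dependency pair condition must be checked.

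Next, I would invoke Lemma~\ref{lem:3cnf} to replace $\FFF$ and $X$ by a Q3CNF formula $\FFF'$ and a set $X' \subseteq \var_{\exists}(\FFF')$ in time $\mathcal{O}(\Card{\FFF})$, preserving resolution connectedness between any two literals of $\lit(\FFF) \subseteq \lit(\FFF')$. By Proposition~\ref{prop:linearreduction}, the 2-edge-colored graph $G'_{\FFF',X'}$ can then be constructed in time $\mathcal{O}(\Card{\FFF'}) = \mathcal{O}(\Card{\FFF})$, and has size linear in $\Card{\FFF}$ as well.

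Now by Lemma~\ref{lem:reduction}, for any two distinct literals $\ell, \ell' \in \lit(\FFF)$, $\ell$ and $\ell'$ are resolution connected in $\FFF$ (equivalently, in $\FFF'$ with respect to $X'$) if and only if there is a PEC walk from $\ell$ to $\ell'$ in $G'_{\FFF',X'}$ whose first and last edges are both blue. By Proposition~\ref{prop:reachable}, I can run \mbox{PEC-Walk} once from each of the four source vertices $x, \neg x, y, \neg y$ and, for each run, read off in linear time the set of vertices $t$ with $\mathit{blue} \in \psi(t)$; this yields, for each source, the full set of literals resolution connected to it. Four calls cost $\mathcal{O}(\Card{E(G'_{\FFF',X'})}+\Card{V(G'_{\FFF',X'})}) = \mathcal{O}(\Card{\FFF})$ in total.

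Finally, I apply Definition~\ref{def:resdep}: answer ``yes'' exactly if either both $x,y$ and $\neg x, \neg y$ are resolution connected, or both $x, \neg y$ and $\neg x, y$ are, with respect to $X$. Each of these four queries is an $\mathcal{O}(1)$ lookup into the labellings $\psi$ produced above, so the overall running time is $\mathcal{O}(\Card{\FFF})$. There is no real obstacle here—the work has been done in Lemmas~\ref{lem:3cnf} and~\ref{lem:reduction} and Propositions~\ref{prop:linearreduction} and~\ref{prop:reachable}; the only point requiring care is making sure that $X'$ as produced by Lemma~\ref{lem:3cnf} is exactly the set needed by Definition~\ref{def:resdep} (it is, because the auxiliary variables introduced by the 3-CNF conversion are added to the existential side of $X'$ precisely so that resolution connectedness is preserved), and that the literal sources $x, \neg x, y, \neg y$ remain vertices of $G'_{\FFF',X'}$, which they do since $\var(\FFF) \subseteq \var(\FFF')$.
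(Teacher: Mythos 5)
Your proposal is correct and follows essentially the same route as the paper: prefix checks, the 3-CNF conversion of Lemma~\ref{lem:3cnf}, construction of $G'_{\FFF',X'}$ via Proposition~\ref{prop:linearreduction}, the equivalence of Lemma~\ref{lem:reduction}, and constantly many calls to \mbox{PEC-Walk} (Proposition~\ref{prop:reachable}) followed by the dependency-pair check. The only cosmetic difference is that you run the search from all four literals $x,\neg x,y,\neg y$ where two sources would already suffice, which does not affect the linear bound.
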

\begin{proof}
  We prove that there is a linear time decision algorithm. We first check
  whether $q_{\FFF}(x) \neq q_{\FFF}(y)$ and $(x,y)$ is in $R_{\FFF}$. Using
  Lemma \ref{lem:3cnf}, we can then in linear time compute a QCNF formula
  $\FFF'$ and a set $R'$ from $\FFF$ and $R_{\FFF}(x) \setminus
  (\var_{\forall}(\FFF) \cup \{x,y\})$ so that two literals are resolution
  connected in $\FFF'$ with respect to $R'$ if and only if they are resolution
  connected in $\FFF$ with respect to $R_{\FFF}(x) \setminus
  (\var_{\forall}(\FFF) \cup \{x,y\})$. We can then construct the graph
  $G'_{\FFF',R'}$ and determine for all pairs $\ell_x,\ell_y$ with $\ell_x \in
  \{x, \neg x\}$ and $\ell_y \in \{y, \neg y\}$ whether there is a properly
  edge\hy colored walk from $\ell_x$ to $\ell_y$ whose first and last edges
  are blue, which by Lemma \ref{lem:reduction} is equivalent to $\ell_x$ and
  $\ell_y$ being resolution connected in $\FFF'$ with respect to $R'$
  (according to Propositions \ref{prop:linearreduction} and
  \ref{prop:reachable}, this can be done in linear time). Using this
  information, it is straightforward to decide whether $(x,y)$ is a
  resolution\hy path dependency pair in $\FFF$ with respect to $R_{\FFF}(x)
  \setminus (\var_{\forall}(\FFF) \cup \{x,y\})$. Each of these steps requires
  linear time, so we need $\mathcal{O}(\Card{\FFF})$ time in total.
\end{proof}

Samer and Szeider \cite{SamerSzeider09a} generalized the notion of a strong
backdoor set from CNF formulas to QCNF formulas, by adding the requirement
that the backdoor set is closed under a cumulative dependency scheme. They
showed that evaluating QCNF formulas is fixed-parameter tractable (fpt) when
parameterized by the size of a smallest strong backdoor set (with respect to
the classes QHORN or Q2CNF) provided that the considered cumulative dependency
scheme is tractable. By Theorems \ref{thm:DresDep} and \ref{thm:tractable},
one can use the resolution path dependency scheme here and thus get an fpt
result that is stronger than the results achieved by using any of the other
dependency schemes appearing in Fig. \ref{fig:lattice}.

For an existentially quantified variable $y$ in a QCNF $\FFF$, the entire
\emph{set} $\Dres_{\FFF}(y)$ can be computed in linear time: we first
determine the sets $D = \SB \ell \in \lit(\FFF) \SM y$ is resolution connected
to $\ell$ in $\FFF$ with respect to $R_{\FFF}(y) \setminus
\var_{\forall}(\FFF) \SE$ and $D_{\neg} = \SB \ell \in \lit(\FFF) \SM \neg y$
is resolution connected to $\ell$ in $\FFF$ with respect to $R_{\FFF}(y)
\setminus \var_{\forall}(\FFF) \SE$ and store them in a data structure that
allows us to decide membership of literals in constant time (say, an
array). To determine $\Dres_{\FFF}(y)$, we simply check for each element $x$
of $R_{\FFF} \cap \var_{\forall}(\FFF)$ whether $x \in D$ and $\neg x \in
D_{\neg}$, or $\neg x \in D$ and $x \in D_{\neg}$.

Unfortunately we cannot use the same approach to compute the set of dependent
variables $\Dres_{\FFF}(x)$ for a universal variable $x \in
\var_{\forall}(\FFF)$. For every existential variable $y \in
\var_{\exists}(\FFF)$, resolution paths that entail $(x,y) \in \Dres_{\FFF}$
cannot contain $y$ or $\neg y$. Hence the relevant resolution paths are
subject to different constraints for each $y$, and it is not sufficient in
general to construct $G'_{\FFF,X}$ for a single set $X$.

\section{Minimal Dependency Schemes}
The fact that the resolution\hy path dependency scheme is the bottom element
of the lattice represented in Figure \ref{fig:lattice} gives reason to wonder
whether it is the most general dependency scheme. However, computing a minimal
dependency scheme is complete for PSPACE \cite{SamerSzeider09a}. Since the
resolution path dependency scheme is tractable, it follows that it cannot be
minimal. Can we instead prove that $\Dres$ is minimal relative to a class of
``natural'' dependency schemes? At the very least, such a class should include
all the dependency schemes considered so far, which have the following feature
in common: whether a pair of variables is considered dependent is determined
almost entirely in terms of the matrix. We use this property to define a
candidate class.
\begin{definition} A dependency scheme $D$ is called a \emph{matrix dependency
    scheme} if it satisfies the following property: Let $\FFF$ and $\FFF'$ be
  QCNF formulas such that $\FFF'$ is obtained from $\FFF$ by quantifier
  reordering. Moreover, let $x \in \var(\FFF)$ such that $R_{\FFF}(x) =
  R_{\FFF'}(x)$. Then for any $y \in \var(\FFF)$, we have $(x,y) \in D_{\FFF}$
  if and only if $(x,y) \in D_{\FFF'}$.
\end{definition}
The next proposition can be easily verified by inspecting Definition
\ref{def:resdep}.
\begin{proposition}
  The resolution\hy path dependency scheme $\Dres$ is a matrix
  dependency scheme.
\end{proposition}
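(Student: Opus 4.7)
The plan is to verify that each ingredient in Definition~\ref{def:resdep} is invariant under quantifier reordering that preserves $R_\FFF(x)$, so that $\Dres$ satisfies the matrix dependency scheme property by direct unpacking of definitions.

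First, I would record two simple facts about quantifier reordering. Since reordering only permutes the prefix and never changes a quantifier symbol, we have $q_{\FFF} = q_{\FFF'}$ and in particular $\var_\forall(\FFF) = \var_\forall(\FFF')$ and $\var_\exists(\FFF) = \var_\exists(\FFF')$. Combined with the hypothesis $R_{\FFF}(x) = R_{\FFF'}(x)$, this yields
\[
R_{\FFF}(x) \setminus (\var_\forall(\FFF) \cup \{x,y\}) \;=\; R_{\FFF'}(x) \setminus (\var_\forall(\FFF') \cup \{x,y\})
\]
for every $y \in \var(\FFF)$. Call this common set $X$. Also, $(x,y) \in R_\FFF$ iff $y \in R_\FFF(x) = R_{\FFF'}(x)$ iff $(x,y) \in R_{\FFF'}$, and $q_\FFF(x) \neq q_\FFF(y)$ iff $q_{\FFF'}(x) \neq q_{\FFF'}(y)$.

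Second, I would observe that Definition~\ref{def:rconnected} of an $X$-resolution path mentions only the clause set $F$, the literal set $\lit(\FFF)$, and the set $X$; the prefix plays no role. Since $\FFF$ and $\FFF'$ share the same matrix and the same variables (and hence the same literal set), they admit exactly the same $X$-resolution paths. Therefore, for every $\ell, \ell' \in \lit(\FFF)$, the literals $\ell$ and $\ell'$ are resolution connected in $\FFF$ with respect to $X$ if and only if they are resolution connected in $\FFF'$ with respect to $X$. Lifting this to dependency pairs, $(x,y)$ is a resolution-path dependency pair in $\FFF$ with respect to $X$ iff $(x,y)$ is a resolution-path dependency pair in $\FFF'$ with respect to $X$.

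Combining these observations, each of the three conditions appearing in the definition of $\Dres_\FFF$ — membership in $R_\FFF$, the inequality $q_\FFF(x) \neq q_\FFF(y)$, and being a resolution-path dependency pair with respect to $R_\FFF(x) \setminus (\var_\forall(\FFF) \cup \{x,y\})$ — holds for $\FFF$ iff it holds for $\FFF'$. Hence $(x,y) \in \Dres_\FFF$ iff $(x,y) \in \Dres_{\FFF'}$, which is exactly the matrix dependency scheme property. There is no real obstacle here: the argument is a one-pass check that every clause in Definition~\ref{def:resdep} refers only to data (the matrix, the variable set, the quantifier type, and the set $R_\FFF(x)$) that is fixed by the hypothesis.
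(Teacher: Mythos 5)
Your proposal is correct and matches the paper's intent exactly: the paper offers no written proof, stating only that the claim ``can be easily verified by inspecting Definition~\ref{def:resdep},'' and your argument is precisely that inspection carried out in full (reordering preserves $q_{\FFF}$ and hence $\var_\forall(\FFF)$, the hypothesis fixes $R_{\FFF}(x)$, and resolution paths depend only on the matrix and the set $X$). The only implicit reliance is on Theorem~\ref{thm:DresDep} for the fact that $\Dres$ is a dependency scheme at all, which the definition of a matrix dependency scheme presupposes; that is already established in the paper, so nothing is missing.
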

Unfortunately, $\Dres$ is not even the most general matrix dependency
scheme. We now show that there is a cumulative matrix dependency scheme which
is strictly more general than $\Dres$. Let $\FFF$ be an arbitrary QCNF
formula.
\begin{definition} We let $\Dmat: \FFF \mapsto \Dmat_{\FFF}$, where
  $\Dmat_{\FFF} = \SB (x,y) \in R_{\FFF} \SM$ there is a formula $\FFF' =
  \mathsf{Q}_1 x_1 \dots \mathsf{Q}_x x \mathsf{Q}_y y \dots \mathsf{Q}_n
  x_n\:F$ obtained from $\FFF$ by quantifier reordering, such that
  $R_{\FFF}(x) \supseteq R_{\FFF'}(x)$ and $\nu(\FFF') \neq \nu(\FFF'')$,
  where $\FFF''= \mathsf{Q}_1 x_1 \dots$\hskip 0pt$ \mathsf{Q}_y y
  \mathsf{Q}_x x \dots$\hskip 0pt$ \mathsf{Q}_n x_n\:F \SE$.
\end{definition}
\begin{restatable}{proposition}{PropDmatCumulative}\label{prop:dmatcumulative}\shortversion{\textup{($\star$)}}
$\Dmat$ is a cumulative matrix dependency scheme.
\end{restatable}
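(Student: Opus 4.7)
The plan is to invoke Lemma \ref{lem:transsnd} by establishing that $\Dmat$ is both sound for transpositions and continuous, and then to verify the matrix property directly from the definition.

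First, soundness for transpositions is essentially immediate: if $x_r$ and $x_{r+1}$ are adjacent in the prefix of $\FFF$ and $(x_r, x_{r+1}) \notin \Dmat_\FFF$, then we can take the ``witness'' reordering in the definition of $\Dmat_\FFF$ to be $\FFF$ itself. The containment $R_\FFF(x_r) \supseteq R_\FFF(x_r)$ is trivial, so any change in evaluation caused by swapping $x_r$ and $x_{r+1}$ would place the pair in $\Dmat_\FFF$, contradicting the hypothesis.

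The main step is continuity. Fix $\FFF$ and let $\FFF'$ be the formula obtained by transposing adjacent $x_r$ and $x_{r+1}$. The plan rests on three elementary observations about quantifier reorderings: (i) $\FFF$ and $\FFF'$ have the same set of quantifier reorderings, since these are just permutations of the same prefix over the same matrix; (ii) $R_\FFF(v) = R_{\FFF'}(v)$ for every $v \in \var(\FFF) \setminus \{x_r, x_{r+1}\}$, because swapping two adjacent quantifications does not change the set of variables to the right of any other variable; and (iii) $R_\FFF(x_r) = R_{\FFF'}(x_r) \cup \{x_{r+1}\}$ and $R_{\FFF'}(x_{r+1}) = R_\FFF(x_{r+1}) \cup \{x_r\}$. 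Using (i) and (ii), the condition defining membership of $(v, w)$ in $\Dmat_\FFF$ coincides with that for $\Dmat_{\FFF'}$ whenever $v \notin \{x_r, x_{r+1}\}$, yielding $\Dmat_\FFF(v) = \Dmat_{\FFF'}(v)$. For $v = x_r$, any reordering that witnesses $(x_r, w) \in \Dmat_{\FFF'}$ still witnesses $(x_r, w) \in \Dmat_\FFF$, because $R_\FFF(x_r) \supseteq R_{\FFF'}(x_r)$ makes the containment side condition strictly easier to satisfy in $\FFF$; hence $\Dmat_{\FFF'}(x_r) \subseteq \Dmat_\FFF(x_r)$. The inclusion $\Dmat_\FFF(x_{r+1}) \subseteq \Dmat_{\FFF'}(x_{r+1})$ follows symmetrically from $R_{\FFF'}(x_{r+1}) \supseteq R_\FFF(x_{r+1})$. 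Lemma \ref{lem:transsnd} then gives that $\Dmat$ is a cumulative dependency scheme.

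Finally, the matrix property is a direct reading of the definition: if $\FFF'$ is any reordering of $\FFF$ with $R_\FFF(x) = R_{\FFF'}(x)$, then $(x, y) \in R_\FFF$ if and only if $(x, y) \in R_{\FFF'}$, and the existential statement in the definition of $\Dmat$ quantifies over the same set of reorderings subject to the same containment constraint on $x$; hence $(x, y) \in \Dmat_\FFF$ iff $(x, y) \in \Dmat_{\FFF'}$. The only mild subtlety anywhere is keeping track of the two directions of containment when $v \in \{x_r, x_{r+1}\}$ in the continuity step; the remainder is bookkeeping.
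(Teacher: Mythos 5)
Your proof is correct and follows essentially the same route as the paper: establish that $\Dmat$ is sound for transpositions (using the identity reordering as the witness) and continuous, invoke Lemma~\ref{lem:transsnd}, and read the matrix property directly off the definition. If anything, your bookkeeping of the inclusion directions in the continuity step ($R_{\FFF'}(x_r) \subseteq R_{\FFF}(x_r)$, hence $\Dmat_{\FFF'}(x_r) \subseteq \Dmat_{\FFF}(x_r)$, and symmetrically for $x_{r+1}$) is stated more carefully than in the paper's own write-up.
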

\longversion{\longversion{\begin{sloppypar}}
\begin{proof}
  It is evident that $\Dmat$ is sound for transpositions because the
  identity permutation is among those quantified over in the definition
  above. Our next goal is to show that $\Dmat$ is continuous. Consider two
  QCNF formulas $\FFF = \mathsf{Q}_1 x_1 \dots \mathsf{Q}_r x_r
  \mathsf{Q}_{r+1} x_{r+1} $\hskip 0pt$\dots \mathsf{Q}_n x_n\:F$ and $\FFF' =
  \mathsf{Q}_1 x_1 \dots \mathsf{Q}_{r+1} x_{r+1} \mathsf{Q}_r x_r \dots
  \mathsf{Q}_n x_n\:F$. For all $i \in \{1, \dots,r-1,r+2,\dots, n\}$, we have
  $R_{\FFF}(x_i) = R_{\FFF'}(x_i)$ and therefore $\Dmat_{\FFF}(x_i) =
  \Dmat_{\FFF'}(x_i)$. Since $R_{\FFF}(x_r) \subseteq R_{\FFF'}(x_r)$, every
  formula $\FFF''$ obtained by quantifier reordering from $\FFF$ such that
  $R_{\FFF} \supseteq R_{\FFF''}$ is one that can be obtained from $\FFF'$ by
  quantifier reordering such that $R_{\FFF'} \supseteq R_{\FFF''}$. From this
  we obtain $\Dmat_{\FFF}(x_r) \subseteq \Dmat_{\FFF'}(x_r)$. A symmetric
  argument yields $\Dmat_{\FFF'}(x_{r+1}) \subseteq
  \Dmat_{\FFF}(x_{r+1})$, which completes the proof that $\Dmat$ is
  continuous. Applying Lemma \ref{lem:transsnd}, we conclude that $\Dmat$ is
  a cumulative dependency scheme. Moreover, $\Dmat$ is clearly a matrix
  dependency scheme.
\end{proof}
\longversion{\end{sloppypar}}
}
\begin{restatable}{proposition}{PropDmatDres}\label{prop:resdepcontained}\shortversion{\textup{($\star$)}}
  For every QCNF formula $\FFF$, the relation $\Dmat_{\FFF}$ is contained in
  $\Dres_{\FFF}$, and containment is strict in some cases.
\end{restatable}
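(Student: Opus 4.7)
The statement has two parts: the inclusion $\Dmat_{\FFF} \subseteq \Dres_{\FFF}$ for every QCNF formula $\FFF$, and an explicit witness to strict containment. I would prove each separately, using Theorem~\ref{thm:resdep1} for the former and a carefully chosen small formula for the latter.

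\emph{Inclusion.} Suppose $(x,y) \in \Dmat_{\FFF}$, witnessed by a reordering $\FFF' = \mathsf{Q}_1 x_1 \dots \mathsf{Q}_x x \mathsf{Q}_y y \dots \mathsf{Q}_n x_n\: F$ of $\FFF$ with $R_{\FFF}(x) \supseteq R_{\FFF'}(x)$ and $\nu(\FFF') \neq \nu(\FFF'')$, where $\FFF''$ is obtained by transposing $\mathsf{Q}_x x$ and $\mathsf{Q}_y y$ in the prefix of $\FFF'$. Since transposing two identically quantified adjacent variables yields an equivalent formula, we have $q_{\FFF}(x) \neq q_{\FFF}(y)$. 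The contrapositive of Theorem~\ref{thm:resdep1} (applied directly when $x$ is universal and $y$ existential, and via the symmetric argument appearing in the proof of Theorem~\ref{thm:DresDep} in the opposite case) forces $(x,y) \in \Dres_{\FFF'}$. Now set $X = R_{\FFF}(x) \setminus (\var_{\forall}(\FFF) \cup \{x,y\})$ and $X' = R_{\FFF'}(x) \setminus (\var_{\forall}(\FFF') \cup \{x,y\})$. Quantifier reordering preserves quantifier types, so $\var_{\forall}(\FFF) = \var_{\forall}(\FFF')$, and together with $R_{\FFF}(x) \supseteq R_{\FFF'}(x)$ this gives $X \supseteq X'$. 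Since $\FFF$ and $\FFF'$ share the same matrix, every $X'$-resolution path in $\FFF'$ is also an $X$-resolution path in $\FFF$, so $(x,y)$ remains a resolution-path dependency pair in $\FFF$ with respect to $X$. Combined with $(x,y) \in R_{\FFF}$ (from $\Dmat_{\FFF} \subseteq R_{\FFF}$), this yields $(x,y) \in \Dres_{\FFF}$.

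\emph{Strictness.} I would propose
\[
\FFF = \forall u \, \exists e \, \exists z \: (u \vee e \vee z) \wedge (\neg u \vee \neg e \vee z)
\]
and argue $(u,e) \in \Dres_{\FFF} \setminus \Dmat_{\FFF}$. For $\Dres_{\FFF}$, the relevant set is $X = R_{\FFF}(u) \setminus (\var_{\forall}(\FFF) \cup \{u,e\}) = \{z\}$; because $\neg z$ never occurs in the matrix, no $\{z\}$-resolution path can switch polarity across $z$, so each such path reduces to a single clause traversal. This gives $u$ resolution connected to $e$ via $u, (u \vee e \vee z), e$ and $\neg u$ resolution connected to $\neg e$ via $\neg u, (\neg u \vee \neg e \vee z), \neg e$, exhibiting $(u,e)$ as a resolution-path dependency pair. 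For $\Dmat_{\FFF}$, the only reorderings $\FFF'$ placing $\forall u$ immediately before $\exists e$ while satisfying $R_{\FFF}(u) \supseteq R_{\FFF'}(u)$ are $\FFF$ itself and $\exists z\, \forall u\, \exists e\: F$; in each such $\FFF'$ and in the corresponding $\FFF''$ obtained by swapping $u$ and $e$, the assignment $z = 1$ satisfies both clauses independently of $u$ and $e$, so $\nu(\FFF') = \nu(\FFF'') = 1$. Hence $(u,e) \notin \Dmat_{\FFF}$.

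\emph{Main difficulty.} The inclusion is essentially bookkeeping on top of Theorem~\ref{thm:resdep1}; the genuinely creative step is constructing the strictness example. The formula must simultaneously admit a resolution path that makes $\Dres$ flag $(u,e)$ and remain satisfiable in a way that is independent of the relative order of $u$ and $e$ in every admissible reordering. The crucial idea is to introduce an auxiliary existential variable $z$ that occurs only positively and to the right of $u$: the positive-only occurrence blocks polarity switches in resolution paths, so $u$ and $\neg u$ stay connected to $e$ and $\neg e$ respectively via the two clauses, while the same positive-only occurrence lets $z = 1$ trivialize both clauses in any admissible reordering, killing the dependency from the $\Dmat$ point of view.
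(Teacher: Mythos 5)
Your proof of the inclusion is essentially the paper's own argument: take the witnessing reordering $\FFF'$, use soundness of $\Dres$ for transpositions (i.e., the contrapositive of Theorem~\ref{thm:resdep1}, together with the symmetric case) to get $(x,y)\in\Dres_{\FFF'}$, and then transfer resolution paths from $\FFF'$ to $\FFF$ via $R_{\FFF'}(x)\subseteq R_{\FFF}(x)$; your explicit remark that $q_{\FFF}(x)\neq q_{\FFF}(y)$ is a small point the paper leaves implicit, and it is correct.

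For strictness you use a different witness than the paper, and it checks out. The paper takes $\forall x_1\forall x_2\exists y\:(x_1)\wedge(\neg x_2\vee y)\wedge(x_2\vee\neg y)$, which is unsatisfiable under every reordering because of the trivially false unit clause $(x_1)$, so no transposition can change the truth value and $(x_2,y)\notin\Dmat$, while the syntactic two-clause connection still puts $(x_2,y)$ into $\Dres$. Your example is the dual trick: the positive-only auxiliary variable $z$ makes the formula satisfiable under every admissible reordering (set $z=1$), again freezing the truth value, while the same positive-only occurrence blocks polarity switches so that the single-clause resolution paths survive and $(u,e)\in\Dres_{\FFF}$. I verified both halves of your example: with $X=\{z\}$ no resolution path can cross from one clause to the other since $\neg z$ occurs nowhere, so exactly the two single-clause paths you name witness the dependency pair; and both admissible reorderings and their transposes evaluate to $1$. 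Either example does the job; the paper's version has the minor additional feature, noted after the proof, that the trivializing clause can be replaced by an arbitrary unsatisfiable formula.
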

\longversion{\begin{proof}
  Let $\FFF$ be a QCNF formula with $x,y~\in~\var(\FFF)$, and suppose that
  $(x,y) \in \Dmat_{\FFF}$. Then there is a formula $\FFF' = \mathsf{Q}_1
  x_1 \dots \mathsf{Q}_x x \mathsf{Q}_y y \dots \mathsf{Q}_n x_n\:F$ obtained
  from $\FFF$ by quantifier reordering, such that $R_{\FFF}(x) \supseteq
  R_{\FFF'}(x)$, and $\nu(\FFF') \neq \nu(\FFF'')$, where $\FFF''=
  \mathsf{Q}_1 x_1 \dots$\hskip 0pt$ \mathsf{Q}_y y \mathsf{Q}_x x
  \dots$\hskip 0pt$ \mathsf{Q}_n x_n\:F$. Because $\Dres$ is a dependency
  scheme, we must have $(x,y) \in \Dres_{\FFF'}$. Since $R_{\FFF}(x)
  \supseteq R_{\FFF'}(x)$, if two literals $\ell_x \in \{x,\neg x\}$ and
  $\ell_y \in \{y, \neg y\}$ are resolution connected in $\FFF'$ with respect
  to $R_{\FFF'}(x) \setminus (\var_{\forall}(\FFF') \cup \{x,y\})$, then they
  are resolution connected in $\FFF$ with respect to $R_{\FFF}(x) \setminus
  (\var_{\forall}(\FFF) \cup \{x,y\})$. It follows that $(x,y) \in
  \Dres_{\FFF}$.

  To see that $\Dmat$ is strictly contained in $\Dres$, consider the QCNF
  formula~$\GGG$:
  \[\GGG = \forall x_1 \forall x_2 \exists y\: (x_1) \wedge (\neg x_2 \vee y)
  \wedge (x_2 \vee \neg y) \] It is easy to verify that $\GGG$ is
  unsatisfiable and remains unsatisfiable if $\exists y$ and $ \forall x_2$
  switch positions in the prefix. From this we get $(x_2,y) \notin
  \Dmat_{\GGG}$. At the same time, it is straightforward to check that
  $(x_2,y) \in \Dres_{\GGG}$.
\end{proof}
} 

\longversion{Note that the trivially unsatisfiable clause $(x_1)$ in the above
  formula can be replaced by an arbitrary unsatisfiable formula.}

The reduction applied in the proof of the following result
essentially corresponds to the one used by Samer and Szeider to establish
PSPACE\hy hardness of computing minimal dependency schemes
\cite{SamerSzeider09a}.
\begin{restatable}{proposition}{PropPiTwo}\label{prop:piTwo}\shortversion{\textup{($\star$)}}
  Let $\FFF$ be a QCNF formula with matrix $F$ and $x,y \in \var(\FFF)$. The
  problem of deciding whether there exists a matrix dependency scheme $D$ such
  that $(x,y) \notin D_{\FFF}$ is $\Sigma^P_2$\hy hard.
\end{restatable}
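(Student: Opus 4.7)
The plan is to reduce from the $\Sigma^P_2$-complete problem of evaluating quantified Boolean formulas of the form $\Psi = \exists X \forall Y \phi$ with $\phi$ in CNF. Given such $\Psi$, I would construct in polynomial time a QCNF formula $\FFF$ containing the variables of $\Psi$ together with a fresh universal variable $u$ and a fresh existential variable $v$ placed adjacent in the prefix, with a CNF matrix built from $\phi$ together with gadget clauses linking $u$, $v$, and the variables of $\Psi$, such that $\Psi$ is true if and only if some matrix dependency scheme $D$ satisfies $(u,v) \notin D_\FFF$. The reduction would follow the template of the PSPACE-hardness proof of Samer and Szeider \cite{SamerSzeider09a}, but because $\Psi$ ranges over $\Sigma^P_2$ rather than arbitrary QBF we obtain the weaker $\Sigma^P_2$ lower bound instead of PSPACE-hardness.

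For the forward direction I would establish $(u, v) \notin \Dmat_\FFF$ whenever $\Psi$ is true. Since $\Dmat$ is a matrix dependency scheme by Proposition \ref{prop:dmatcumulative}, this immediately witnesses the desired $D$. The key calculation is that for any reordering $\FFF'$ with $u$ and $v$ still adjacent and $R_{\FFF'}(u) \subseteq R_{\FFF}(u)$, a satisfying $X$-assignment for $\Psi$ gives the $\exists v$ player a uniform strategy that makes the swap of $\forall u \exists v$ with $\exists v \forall u$ truth-preserving. For the backward direction, if $\Psi$ is false, I would exhibit a reordering $\FFF^*$ of $\FFF$ with $R_{\FFF^*}(u) = R_{\FFF}(u)$ witnessing a genuine swap-changing transposition. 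For any matrix dependency scheme $D$, the matrix property then equates $(u,v) \in D_\FFF$ with $(u,v) \in D_{\FFF^*}$; and because $D$ is sound for transpositions (as every dependency scheme is, see the remarks preceding Definition \ref{def:transsnd}), the swap-changing property of $\FFF^*$ forces $(u,v) \in D_{\FFF^*}$, and hence $(u,v) \in D_\FFF$.

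The main obstacle will be engineering the gadget clauses so that both directions go through simultaneously. In particular, the backward direction requires the witnessing reordering to preserve $R(u)$ \emph{exactly} (not just as a subset), so that the matrix property of an arbitrary matrix dep scheme can be invoked; this means the dependency must already be manifest in the original prefix of $\FFF$ when $\Psi$ fails, yet must dissolve under every admissible reordering when $\Psi$ holds. Threading this needle is precisely what makes the alternation $\exists X \forall Y$ of $\Psi$ map onto the outer existential over reorderings implicit in $\Dmat$'s definition and the inner universal/CNF evaluation check, yielding the $\Sigma^P_2$ complexity; this contrasts with the full Samer--Szeider reduction, where the reordering freedom is traded for embedding an arbitrary QBF into the gadget matrix and the resulting complexity reaches PSPACE.
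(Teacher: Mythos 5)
There is a genuine gap, and it is at the very first step: the source problem you reduce from is not $\Sigma^P_2$\hy hard. Evaluating $\Psi=\exists X\forall Y\,\phi$ with $\phi$ in \emph{CNF} is only NP\hy complete: for a fixed assignment to $X$, the check $\forall Y\,\phi$ amounts to verifying that every clause either contains an $X$\hy literal made true or is tautological in $Y$, which is polynomial. ($\Sigma_2$\hy completeness of $\exists\forall$ formulas requires a DNF matrix, which you cannot embed as the matrix of a QCNF.) The paper sidesteps this mismatch by reducing from \emph{unsatisfiability} of $\forall^*\exists^*$ QCNF formulas $\GGG=\forall x_1\dots\forall x_n\exists y_1\dots\exists y_m\,G$, which is genuinely $\Sigma^P_2$\hy complete (it is the complement of $\Pi_2$\hy CNF validity), and which keeps the matrix in CNF throughout. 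As stated, your reduction cannot establish anything beyond NP\hy hardness.

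The second gap is the gadget you explicitly defer (``the main obstacle will be engineering the gadget clauses''). In fact no clauses linking $u$, $v$ to the variables of $\Psi$ are needed, and trying to build such links is what makes your forward direction (a ``uniform strategy'' for the $\exists v$ player) murky. The paper appends $\forall x\exists y$ to the \emph{end} of the prefix and adds only the two clauses $(x\vee\neg y)\wedge(\neg x\vee y)$ on fresh variables, completely decoupled from $G$. Then $R_{\FFF}(x)=\{y\}$, so any dependency scheme $D$ with $(x,y)\notin D_{\FFF}$ forces $\FFF$ to be equivalent to the shift that places $\forall x$ last; the tail $\exists y\forall x$ over the equivalence clauses is unsatisfiable, hence $\FFF$ and therefore $\GGG$ are unsatisfiable. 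Conversely, if no such scheme exists then $(x,y)\in\Dmat_{\FFF}$, and unwinding the definition of $\Dmat$ (every admissible reordering keeps $x,y$ last, and the swapped formula is always unsatisfiable) yields a satisfiable reordering of $\FFF$, hence $\FFF$ and $\GGG$ are satisfiable, since down\hy shifting existentials preserves satisfiability. Your overall two\hy pronged strategy --- witnessing the ``yes'' side with $\Dmat$ and forcing the pair into every matrix scheme via soundness for transpositions on the ``no'' side --- does match the paper's, but without the correct source problem and a concrete gadget the argument does not go through.
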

\longversion{\begin{proof}
  Let $\GGG = \forall x_1 \dots \forall x_n \exists y_1 \dots \exists y_m\:G$
  be a QCNF formula, and $x,y$ new variables not in $\var(\FFF)$. Further, let
  $F = G \wedge (x \vee \neg y) \wedge (\neg x \vee y)$, and let $\FFF =
  \forall x_1 \dots \forall x_n \exists y_1 \dots \exists y_m \forall x
  \exists y\: F $. We will show that there is a matrix dependency scheme $D$
  such that $(x,y) \notin D_{\FFF}$ if and only if $\GGG$ is unsatisfiable. It
  is well known that deciding unsatisfiability of quantified boolean formulas
  with a $\forall^* \exists^*$\hy prefix is $\Sigma^P_2$\hy complete
  \cite{Stockmeyer76}.

  Suppose there is a matrix dependency scheme $D$ such that $(x,y) \notin
  D_{\FFF}$. Since $D$ is a dependency scheme, $\FFF$ and $\FFF' =
  S^{\downarrow}(\FFF,D^*_{\FFF}(x))$ are equivalent. It is easy to see that
  $\FFF'$ is unsatisfiable, so $\FFF$ must be unsatisfiable as well. Because
  the formula $\forall x \exists y\: (x \vee \neg y) \wedge (\neg x \vee y)$ is
  satisfiable, already $\GGG$ must have been unsatisfiable.

  On the other hand, if there is no matrix dependency scheme $D$ such that
  $(x,y) \notin D_{\FFF}$, then in particular $(x,y) \in \Dmat_{\FFF}$. By
  definition of $\Dmat$, there must be a formula $\FFF' = \dots \forall x
  \exists y\: F \wedge (x \vee \neg y) \wedge (\neg x \vee y)$ obtained from
  $\FFF$ by quantifier reordering so that transposing $\forall x$ and $\exists
  y$ in the prefix of $\FFF'$ results in a formula $\FFF''$ such that
  $\nu(\FFF') \neq \nu(\FFF'')$. Since $\exists y \forall x\: (x \vee \neg y)
  \wedge (\neg x \vee y)$ is unsatisfiable, we can conclude that $\FFF'$ must
  be satisfiable. Downshifting of existential variables cannot turn a
  satisfiable formula into an unsatisfiable one, so $\FFF$ is satisfiable as
  well, and we can conclude that $\GGG$ is satisfiable.
\end{proof}
}

One may object that these considerations do not rule out the possibility that
$\Dres$ is the most general \emph{tractable} matrix dependency scheme. That
this is not the case can be seen from the following simple argument. For any
nonnegative integer $k$, we define a mapping $D^k$ such that for any QCNF
formula $\FFF$ we have $D^k_{\FFF} = \Dmat_{\FFF}$ if $\Card{\FFF} \leq k$,
and $D^k_{\FFF} = \Dres_{\FFF}$ otherwise. As both $\Dmat$ and $\Dres$ are
cumulative matrix dependency schemes and the relevant properties are defined
pointwise, any such function $D^k$ must be a cumulative matrix dependency
scheme as well. Moreover, each scheme $D^k$ is clearly tractable and from the
proof of Proposition~\ref{prop:resdepcontained} we know that $D^k$ is strictly
more general than $\Dres$ for $k \geq 5$.
\section{Conclusion}
We have shown that resolution path dependencies give rise to a cumulative
dependency scheme that can be decided in linear time. While the latter result
is optimal for the decision problem, we see at least two obstacles for an
efficient implementation. First, computing the entire relation $\Dres_{\FFF}$
using our current algorithm requires $\mathcal{O}(\Card{\FFF}^3)$ time, which
is prohibitive for practical purposes. Second, it is unclear whether one can
find succinct representations of the relation $\Dres_{\FFF}$ similar to those
used for the standard dependency scheme \cite{LonsingBiere2009}. We leave this
issues for future work.

To capture the kind of variable dependencies relevant for expansion\hy based
QBF solvers, Samer considered an alternative definition of dependency schemes
based on variable \emph{independence} \cite{Samer08}. It might be interesting
to study resolution path dependencies in this context as well.
\bibliographystyle{abbrv} 

\end{document}